\newcommand{\cmark}{\ding{51}}%
\newcommand{\xmark}{\ding{55}}%
\begin{document}

\newcommand{\eps}{\varepsilon}
\newcommand{\var}{\operatorname{Var}}
\newcommand{\pr}{\mathbb{P}}
\newcommand{\expn}{\mathbb{E}}
\newcommand{\caps}{C^{\text{s}}}
\newcommand{\capj}{C^{\text{j}}}
\newcommand{\thint}{\vec{\theta}_{\text{int}}}
\newcommand{\thall}{\vec{\theta}_{\text{all1}}}
\newcommand{\thmaj}{\vec{\theta}_{\text{maj}}}
\newcommand{\thmin}{\vec{\theta}_{\text{min}}}
\newcommand{\thcoi}{\vec{\theta}_{\text{coin}}}
\newcommand{\pint}{p_{\text{int}}}
\newcommand{\pall}{p_{\text{all1}}}
\newcommand{\pmaj}{p_{\text{maj}}}
\newcommand{\pmin}{p_{\text{min}}}
\newcommand{\pcoi}{p_{\text{coin}}}
\renewcommand{\vec}{\bm}
\newcommand{\vc}{\mathbf}
\newcommand{\kl}{d_{\operatorname{KL}}}

\newtheorem{theorem}{Theorem}
\newtheorem{lemma}{Lemma}
\newtheorem{proposition}{Proposition}
\newtheorem{corollary}{Corollary}
\newtheorem{conjecture}{Conjecture}

\title{Optimal sequential fingerprinting: Wald vs. Tardos}
\author{Thijs Laarhoven\footnote{T. Laarhoven is with the Department of Mathematics and Computer Science, Eindhoven University of Technology, P.O. Box 513, 5600 MB Eindhoven, The Netherlands. \protect\\
E-mail: mail@thijs.com.}}
\date{\today}

\maketitle
\begin{abstract}
We study sequential collusion-resistant fingerprinting, where the fingerprinting code is generated in advance but accusations may be made between rounds, and show that in this setting both the dynamic Tardos scheme and schemes building upon Wald's sequential probability ratio test (SPRT) are asymptotically optimal. We further compare these two approaches to sequential fingerprinting, highlighting differences between the two schemes. Based on these differences, we argue that Wald's scheme should in general be preferred over the dynamic Tardos scheme, even though both schemes have their merits. As a side result, we derive an optimal sequential group testing method for the classical model, which can easily be generalized to different group testing models.
\end{abstract}


\section{Introduction}

In collusion-resistant fingerprinting, a distributor aims to embed fingerprints in digital content so that even if several users collude and mix their fingerprinted copies into a new copy, the resulting pirate version can still be traced back to the guilty parties. In 2003, the seminal work of Tardos~\cite{tardos08} showed that in the non-adaptive setting, fingerprinting codes with this property must have a length $\ell$ quadratic in the number of colluders $c$ and logarithmic in the total number of users $n$ (i.e., $\ell \propto c^2 \log n$), and that such codes exist. These codes guarantee that with a proper decoding algorithm, at least one of the colluders can be found with high probability. Later, in 2013 it was shown~\cite{laarhoven13tit} that in the adaptive setting, where code words are sent out symbol by symbol and the distributor is allowed to base future decisions on previous results, in fact \textit{all} colluders can provably be found with a code length $\ell \propto c^2 \log n$, using a dynamic version of Tardos' scheme. Results in fingerprinting have recently found applications in other fields as well, including group testing~\cite{laarhoven13allerton, meerwald11b} and differential privacy~\cite{bun14, dwork14, steinke14, ullman13}.

Over the years, various follow-up works to Tardos' milestone paper have allowed us to understand why Tardos' scheme is designed the way it is designed~\cite{furon08, skoric13}, how the scheme can be further improved theoretically~\cite{blayer08, laarhoven14dcc, nuida07, nuida09, skoric08b, skoric08, skoric13} and practically~\cite{charpentier09, desoubeaux13, furon09, furon09b, furon09c, furon12, furon14, kuribayashi12, meerwald11, meerwald12, perez09, simone11}, what are the limitations of fingerprinting in general~\cite{amiri09, huang12, moulin08} and of the optimized (symmetric) Tardos scheme~\cite{laarhoven13ihmmsec, laarhoven14dcc}, and how these limitations can be overcome by further modifying the scheme~\cite{ibrahimi14, laarhoven14ihmmsec, oosterwijk13} to achieve asymptotic optimality~\cite{amiri09, huang12, moulin08, oosterwijk13b}. Most notably, connections were made between fingerprinting, game theory, channel coding, and statistical hypothesis testing, which ultimately allowed us to explain why the optimal non-adaptive designs are optimal~\cite{abbe12, huang12, laarhoven14ihmmsec, meerwald12}.

Although various of these insights directly carry over to the adaptive setting, in this area several questions remain:
\begin{itemize}\vspace{-0.1cm}
	\item Is the ``dynamic Tardos scheme''~\cite{laarhoven13tit, laarhoven13wifs} optimal? \vspace{-0.2cm}
	\item What motivates the design of this scheme?\vspace{-0.1cm}
\end{itemize}
Answering these and related questions may ultimately lead to the same level of understanding for the adaptive case as for the non-adaptive setting, allowing practitioners to make well-motivated design choices in the adaptive setting as well. \\

\textbf{Contributions.} In this paper we answer the second question by showing a connection with what is known in the literature as the \textit{sequential probability ratio test (SPRT)}, invented by Wald in the 1940s~\cite{wald47}. As a result, we are also able to take a first step towards answering the first question: within the class of \textit{sequential} fingerprinting schemes, where the code book is not constructed adaptively, both the dynamic Tardos scheme and schemes built from Wald's SPRT are essentially optimal for the uninformed fingerprinting game. We discuss in detail how sequential fingerprinting schemes can naturally be constructed from Wald's SPRT, and how various results from the literature can be used to tune these schemes to different scenarios. We finally compare the dynamic Tardos scheme to Wald's SPRT, and highlight why in general Wald's scheme should be preferred. \\

\textbf{Roadmap.} First, in Section~\ref{sec:model} we outline the fingerprinting model considered in this paper. In Section~\ref{sec:tardos} we briefly review the dynamic Tardos scheme and its variants. Then, in Section~\ref{sec:wald} we describe Wald's sequential probability ratio test procedure, and how it can be applied to fingerprinting to obtain optimal sequential fingerprinting schemes. Next, in Section~\ref{sec:comp} we illustrate the similarities and differences between these schemes through explicit examples, on the way showing that both schemes are asymptotically optimal for the uninformed fingerprinting game. Finally, in Section~\ref{sec:overview} we give an overview of the main characteristics of both schemes, which may allow practitioners to make a well-informed choice between the two schemes.

\section{Model}
\label{sec:model}

The collusion-resistant fingerprinting problem is often modeled as the following two-person game between the distributor $\mathcal{D}$ and the coalition of pirates $\mathcal{C}$. The set of colluders is assumed to be a random subset of size $|\mathcal{C}| = c$ from the complete set of $n$ users $\mathcal{U}$, and the identities of these colluders are unknown to the distributor. The distributor might not know $c$ either, and he may only have a (crude) upper bound $c_0 \geq c$ on $c$. The aim of the game for the distributor is to discover the identities of the colluders without accidentally accusing innocent users, with as little effort as possible. The colluders want to prevent this and remain hidden. The game consists of the following three phases: the distributor uses an \textit{encoder} to generate fingerprints; the colluders employ a \textit{collusion channel} to generate pirate output; and the distributor uses a \textit{decoder} to map pirate output to a set of accused users. We describe these three phases below.

\paragraph{a. Encoder}
The distributor generates a code $\mathcal{X} = \{\vc{x}_1, \dots, \vc{x}_n\}$, consisting of binary code words $\vc{x}_j \in \{0,1\}^{\ell}$ for each user $j \in \mathcal{U}$, and each column $i \in \{1, \dots, \ell\}$ corresponds to a different segment of the content.\footnote{More generally $\mathcal{X}$ is a code with entries from an alphabet of size $q$, but here we restrict our attention to the case $q = 2$.} A common restriction on the encoding process is to assume that $\mathcal{X}$ is created by first generating a \textit{bias vector} $\vc{p} \in [0,1]^{\ell}$, by choosing each entry $p_i$ independently from a certain distribution $F$, and then generating $\mathcal{X}$ using $P((\vc{x}_j)_i = 1) = p_i$. Schemes with this property are sometimes called \textit{bias-based schemes}.

As initially suggested by Tardos~\cite{tardos08} and later proven by Huang and Moulin~\cite{huang12}, the best way to build bias-based encoders (for large coalitions, in the uninformed setting) is to use the arcsine distribution for generating $p_i$'s. For this distribution, we have the following distribution function $F$:
\begin{align}
F(p) = \frac{2}{\pi} \arcsin \sqrt{p} . \qquad (p \in (0,1)) \label{dist1}
\end{align}
Unless stated otherwise, throughout the paper we will assume that this encoder is used for generating biases.

\paragraph{b. Collusion channel}
Given $\mathcal{X}$, the entries can be used to select and embed watermarks in the content, and the content is sent out to the users. The colluders get together, compare their copies, and use a \textit{collusion channel} or pirate strategy $\Theta$ to select the pirate output $\vc{y} \in \{0,1\}^{\ell}$. If the pirate attack is symmetrical both in the colluders and in the positions $i$, then the collusion channel can be modeled by a vector $\boldsymbol{\theta} \in [0,1]^{c+1}$, with entries $\theta_z = P(Y_i = 1 \mid Z = z)$ indicating the probability of outputting a $1$ when pirates receive $z$ ones and $c - z$ zeros.

\paragraph{c. Decoder}
After the pirate output has been distributed, the distributor intercepts it and applies a decoding algorithm to $\mathcal{X}, \vc{y}, \vc{p}$ to compute a set $\mathcal{C}' \subseteq \mathcal{U}$ of accused users. This is commonly done by assigning scores to users, and accusing those users whose scores exceed a predefined threshold $\eta$. The distributor wins the game if $\mathcal{C}' = \mathcal{C}$ and loses\footnote{In this paper we consider the catch-all scenario, where not \textit{at least one} colluder (the catch-one scenario) but \textit{all} colluders should be found for the distributor to win the game.} if $\mathcal{C}' \neq \mathcal{C}$, which could be because an innocent user $j \notin \mathcal{C}$ is falsely accused (a false positive error), or because a guilty user $j \in \mathcal{C}$ is not accused (a false negative error). We often write $\eps_1$ and $\eps_2$ for (upper bounds on) the false positive and false negative probabilities. \\

Finally, the differences between non-adaptive (static) fingerprinting, adaptive (dynamic) fingerprinting, and sequential fingerprinting can be explained by showing in which order these phases take place. Denoting by $a_i, b_i, c_i$ the three phases corresponding to the $i$th segment of the content, we can order the phases as follows:
\begin{itemize}\vspace{-0.1cm}
	\item Non-adaptive: $a_{[1, \dots, \ell]}; \, b_{[1, \dots, \ell]}; \, c_{[1, \dots, \ell]}$.\vspace{-0.1cm}
	\item Sequential: $a_{[1, \dots, \ell]}; \, b_1; \, c_1; \, b_2; \, c_2; \, \dots; \, b_{\ell}; \, c_{\ell}$.\vspace{-0.1cm}
	\item Adaptive: $a_1; \, b_1; \, c_1; \, a_2; \, b_2; \, c_2; \, \dots; \, a_{\ell}; \, b_{\ell}; \, c_{\ell}$.\vspace{-0.1cm}
\end{itemize}
In other words: in the adaptive setting the code can be adjusted and accusations can be made after every symbol; in sequential fingerprinting only users can be accused between rounds; and in non-adaptive settings the distributor is only allowed to make a final decision at the end of the game.

While most work in the literature focuses on the non-adaptive setting, some work has also been done on sequential~\cite{safavi03} and adaptive fingerprinting~\cite{berkman01, fiat01, laarhoven13tit, laarhoven13wifs, roelse11, tassa05}. In this paper we will mostly deal with the sequential setting.

\section{Tardos' scheme}
\label{sec:tardos}

\subsection{Non-adaptive scheme}

In Tardos' original scheme~\cite{tardos08} and many of its subsequent variants, decoding in the non-adaptive setting is done as follows. First, for each segment $i$ and user $j$, scores $S_{j,i} = g(x_{j,i}, y_i, p_i)$ are assigned using a score function $g$. Then, in the non-adaptive setting, a user $j \in \mathcal{U}$ is accused iff $S_j = \sum_{i = 1}^{\ell} S_{j,i} > \eta$ for some well-chosen threshold $\eta$. Choosing a suitable score function is crucial, and it was long thought that the following symmetrized version~\cite{skoric08} of Tardos' original proposal was the best choice:
\begin{align}
g(x,y,p) = \begin{cases} 
\sqrt{p/(1 - p)} & \text{if } (x, y) = (0,0); \\
-\sqrt{p/(1 - p)} & \text{if } (x, y) = (0,1); \\
-\sqrt{(1 - p)/p} & \text{if } (x, y) = (1,0); \\
\sqrt{(1 - p)/p} & \text{if } (x, y) = (1,1).
\end{cases}
\end{align}
This function turns out to work quite well against arbitrary pirate attacks, and it has the convenient property that regardless of the pirate strategy, one always has $\expn(S_{j,i} \mid H_0) = 0$, $\expn(S_{j,i}^2 \mid H_0) = 1$, and $\expn(S_{j,i} \mid H_1) \approx \frac{2}{\pi}$, where the hypotheses $H_0$ and $H_1$ correspond to:
\begin{itemize}
	\item $H_0$: user $j$ is innocent ($j \notin \mathcal{C}$).
	\item $H_1$: user $j$ is guilty ($j \in \mathcal{C}$).
\end{itemize}
As convenient as this decoder may be, it is known to be suboptimal~\cite{huang12, laarhoven13ihmmsec}, with code lengths which are up to a factor $\frac{1}{4} \pi^2 \approx 2.47$ longer than required. Using various different approaches (e.g.\ Lagrange optimization~\cite{oosterwijk13, oosterwijk13b}, Neyman-Pearson decoding~\cite{laarhoven14ihmmsec}, Bayesian decoding~\cite{desoubeaux13}, MAP decoding~\cite{furon09c}, empirical mutual information decoding~\cite{moulin08}) it was later found that there are various ways to construct decoders for the uninformed setting in fingerprinting with a better performance than the symmetric score function. Various of these decoders were recently benchmarked in~\cite{furon14}, indicating that different decoders work better in different settings. For comparison with Wald's SPRT we will continue the description of Tardos' scheme using Neyman-Pearson-motivated decoders, as considered in e.g.~\cite{furon09b, laarhoven14ihmmsec}, but other decoders considered in~\cite{furon14} may be used as well.

After obtaining the ``evidence'' $\vc{x}_j, \vc{y}, \vc{p}$, the distributor wants to distinguish between whether user $j$ is guilty or not\footnote{Note that we only consider $\vc{x}_j$ to be part of the evidence, instead of $\mathcal{X}$. Using all of $\mathcal{X}$ for decoding would correspond to joint decoding; this is discussed later on.}. The Neyman-Pearson lemma tells us that the most powerful test to distinguish between $H_0$ and $H_1$ (minimizing one error probability, when the other is fixed) is to test whether the following likelihood ratio exceeds an appropriately chosen threshold $\eta'$. We write $f_{A}(a) = \pr(A = a)$ for random variables $A$.
\begin{align}
\Lambda(\vc{x}_j, \vc{y}, \vc{p}) = \frac{f_{\vc{X}_j, \vc{Y} | \vc{P}}(\vc{x}_j, \vc{y} | \vc{p}, H_1)}{f_{\vc{X}_j, \vc{Y} |\vc{P}}(\vc{x}_j, \vc{y} | \vc{p}, H_0)}.
\end{align}
Taking logarithms, and noting that different positions $i$ are i.i.d., testing whether a user's likelihood ratio exceeds $\eta'$ is equivalent to testing whether his score $S_j = \sum_i g(x_{j,i}, y_i. p_i)$ exceeds $\eta = \ln \eta'$ for $g$ defined as follows. Here we omit subscripts on $X$, $Y$ and $P$, as the random variables are i.i.d.\ for different $i,j$.
\begin{align}
g(x,y,p) = \ln\left(\frac{f_{X,Y|P}(x,y|p,H_1)}{f_{X,Y|P}(x,y|p,H_0)}\right). \label{eq:dec-simple-g}
\end{align}
Results of Abbe and Zheng~\cite{abbe12} have shown that in certain applications, a (generalized) linear decoder designed against the worst-case attack is asymptotically optimal. Since the worst-case attack for finite $c$ is somewhat hard to compute, but is known to be close to the interleaving attack~\cite{furon09c, huang09b} (and asymptotically equal to it~\cite{huang12}), an approximation of this optimal decoder may be obtained by assuming the colluders used the interleaving attack $\bm{\theta} = \thint$, defined by
\begin{align}
(\thint)_z = \frac{z}{c} \, . \qquad (0 \leq z \leq c)
\end{align}
In that case, working out the probabilities for fixed $c$ leads to the following score function $g$~\cite{laarhoven14ihmmsec}:
\begin{align}
g(x,y,p) = \begin{cases} 
\ln\left(1 + \frac{p}{c(1 - p)}\right) & \text{if } x = y = 0; \\
\ln\left(1 - \frac{1}{c}\right) & \text{if } x \neq y; \\
\ln\left(1 + \frac{1 - p}{c p}\right) & \text{if } x = y = 1. \end{cases} \label{eq:g}
\end{align}

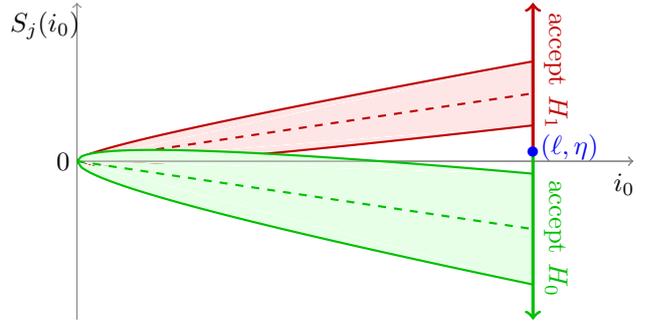
\begin{figure}[t]
\begin{tikzpicture}[scale = 0.6]
\draw [black!50!white, ->] (0, -3.5) -- (0, 3.5);
\draw [black!50!white, ->] (0, 0) -- (12.2, 0);
\node at (-0.7, 3) {$S_j(i_0)$};
\node at (12, -0.5) {$i_0$};
\node at (-0.3, 0) {$0$};
\shade [left color=red!10, right color=red!10, line width=0pt] 
	(0, 0) -- (10, 1.5) -- (10, 2.207) --
	plot [smooth, samples=100, domain=0:10] (\x, {0.15*\x + sqrt(0.05*\x)}) -- cycle;	
\shade [left color=red!10, right color=red!10, line width=0pt] 
	(0, 0) -- (10, 1.5) -- (10, 0.793) --
	plot [smooth, samples=100, domain=0:10] (\x, {0.15*\x - sqrt(0.05*\x)}) -- cycle;
\draw [red!75!black, samples=100, domain=0:10, thick] 
	plot(\x, {0.15*\x + sqrt(0.05*\x)});
\draw [red!75!black, samples=100, domain=0:10, thick, dashed] 
	plot(\x, {0.15*\x});
\draw [red!75!black, samples=100, domain=0:10, thick] 
	plot(\x, {0.15*\x - sqrt(0.05*\x)});
\shade [left color=green!10, right color=green!10] 
	(0, 0) -- (10, -1.5) -- (10, -0.275) --
	plot [smooth, samples=100, domain=0:10] (\x, {-0.15*\x + sqrt(0.15*\x)}) -- cycle;	
\shade [left color=green!10, right color=green!10] 
	(0, 0) -- (10, -1.5) -- (10, -2.725) --
	plot [smooth, samples=100, domain=0:10] (\x, {-0.15*\x - sqrt(0.15*\x)}) -- cycle;
\draw [black!50!white, ->] (0, 0) -- (12.2, 0);
\draw [green!75!black, samples=100, domain=0:10, thick] 
	plot(\x, {-0.15*\x + sqrt(0.15*\x)});
\draw [green!75!black, samples=100, domain=0:10, thick, dashed] 
	plot(\x, {-0.15*\x});
\draw [green!75!black, samples=100, domain=0:10, thick] 
	plot(\x, {-0.15*\x - sqrt(0.15*\x)});
\draw [very thick, red!75!black, ->] (10, 0.2) -- (10, 3.5);
\node [rotate=270] at (10.5, 2.0) {\color{red!75!black} accept $H_1$};
\draw [very thick, green!75!black, ->] (10, 0.2) -- (10, -3.5);
\node [rotate=270] at (10.5, -1.7) {\color{green!75!black} accept $H_0$};
\node at (10, 0.2) {\color{blue}\textbullet};
\node at (10.8, 0.3) {\color{blue} $(\ell, \eta)$};
\end{tikzpicture}
\caption{Tardos' scheme with log-likelihood decoding. The green and red marked areas (dashed lines) indicate the range (average) of innocent and guilty user scores respectively. Accepting $H_0$ or $H_1$ is based on whether $S_j(\ell) > \eta$, i.e., whether a user's score ends up above or below the blue point $(\ell, \eta)$. \label{fig:sketch1}}
\end{figure}

To sketch the situation of cumulative user scores and the accusation procedure, Figure~\ref{fig:sketch1} outlines the scores $S_j(i_0) = \sum_{i=1}^{i_0} S_{j,i}$ against $i_0$, for $i_0 = 0$ up to the final moment of decision $i_0 = \ell$. Assuming a colluder-symmetric collusion channel, scores of users $j \notin \mathcal{C}$ follow a certain random walk with a negative drift $\mu_0 < 0$ and a relatively large variance $\sigma_0^2$, while scores of guilty users $j \in \mathcal{C}$ follow a random walk with a positive drift $\mu_1 > 0$ and a smaller variance $\sigma_1^2$. 

\subsection{Sequential scheme}

The improvement described in~\cite{laarhoven13tit} for the adaptive setting does not change the code generation phase at all, so although it was coined the \textit{dynamic} Tardos scheme, it may more suitably be called the \textit{sequential} Tardos scheme. The modification compared to the non-adaptive scheme described above, to make better use of the sequential setting, is the following: instead of only cutting off users from the content at the very end, when their scores exceed $\eta$, we disconnect users as soon as their normalized scores exceed the normalized threshold $\eta$. This prevents the colluder from contributing to the remaining parts of the content, and allows the distributor to find the remaining colluders as well. Here by normalization we refer to translating the scores by $+\ell \mu_0$, so that innocent users are expected to have an average final score of $0$.

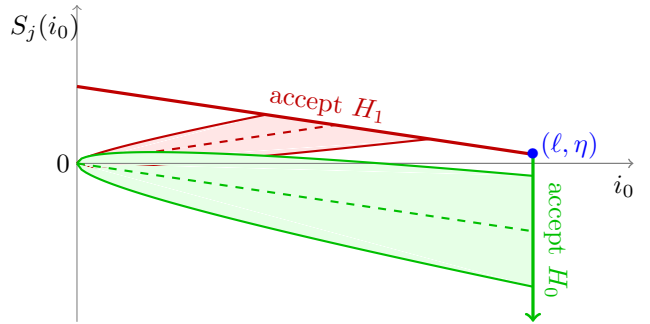
\begin{figure}[t]
\begin{tikzpicture}[scale = 0.6]
\draw [black!50!white, ->] (0, -3.5) -- (0, 3.5);
\draw [black!50!white, ->] (0, 0) -- (12.2, 0);
\node at (-0.7, 3.0) {$S_j(i_0)$};
\node at (12.0, -0.5) {$i_0$};
\node at (-0.3, 0) {$0$};
\shade [bottom color=red!10, top color=red!10, line width=0pt] 
	(0, 0) -- (5.667, 0.850) -- (4.149, 1.078) --
	plot [smooth, samples=100, domain=0:4.149] (\x, {0.15*\x + sqrt(0.05*\x)}) -- cycle;	
\shade [bottom color=red!10, top color=red!10, line width=0pt] 
	(0, 0) -- (5.667, 0.850) -- (7.740, 0.539) --
	plot [smooth, samples=100, domain=0:7.740] (\x, {0.15*\x - sqrt(0.05*\x)}) -- cycle;
\draw [red!75!black, samples=100, domain=0:4.149, thick] 
	plot(\x, {0.15*\x + sqrt(0.05*\x)});
\draw [red!75!black, samples=100, domain=0:5.667, thick, dashed] 
	plot(\x, {0.15*\x});
\draw [red!75!black, samples=100, domain=0:7.740, thick] 
	plot(\x, {0.15*\x - sqrt(0.05*\x)});
\shade [left color=green!10, right color=green!10] 
	(0, 0) -- (10, -1.5) -- (10, -0.275) --
	plot [smooth, samples=100, domain=0:10] (\x, {-0.15*\x + sqrt(0.15*\x)}) -- cycle;	
\shade [left color=green!10, right color=green!10] 
	(0, 0) -- (10, -1.5) -- (10, -2.725) --
	plot [smooth, samples=100, domain=0:10] (\x, {-0.15*\x - sqrt(0.15*\x)}) -- cycle;
\draw [black!50!white, ->] (0, 0) -- (12.2, 0);
\draw [green!75!black, samples=100, domain=0:10, thick] 
	plot(\x, {-0.15*\x + sqrt(0.15*\x)});
\draw [green!75!black, samples=100, domain=0:10, thick, dashed] 
	plot(\x, {-0.15*\x});
\draw [green!75!black, samples=100, domain=0:10, thick] 
	plot(\x, {-0.15*\x - sqrt(0.15*\x)});
\draw [very thick, red!75!black] (10, 0.2) -- (0, 1.7);
\node [rotate=-8.5] at (5.5, 1.3) {\color{red!75!black} accept $H_1$};
\draw [very thick, green!75!black, ->] (10, 0.2) -- (10, -3.5);
\node [rotate=270] at (10.5, -1.7) {\color{green!75!black} accept $H_0$};
\node at (10, 0.2) {\color{blue}\textbullet};
\node at (10.8, 0.4) {\color{blue} $(\ell, \eta)$};
\end{tikzpicture}
\caption{Laarhoven et al.'s sequential Tardos scheme with log-likelihood scores. The red decreasing line, which runs parallel to the dashed green line, shows when users are accused and disconnected. \label{fig:sketch2}}
\end{figure}

To illustrate the effect of this change to the scheme, Figure~\ref{fig:sketch2} sketches the cumulative user scores in the sequential setting \textit{without normalization}, and the new accusation criterion. Without normalization, the scores follow the same general path as in Figure~\ref{fig:sketch1}, and the red accusation threshold becomes a decreasing line, rather than a horizontal line as in~\cite{laarhoven13tit, laarhoven13wifs}. As discussed in~\cite{laarhoven13tit}, with this modification one can provably find all colluders rather than only one with a similar provable code length as in the non-adaptive setting. The central result of~\cite{laarhoven13tit} can be stated as follows.

\begin{theorem} \cite{laarhoven13tit} \label{thm:adaptive1}
Suppose $\ell$ and $\eta$ are chosen in the non-adaptive Tardos scheme to guarantee that
\begin{itemize}
  \item[(i)] with prob.\ at least $1 - \eps_1$ no innocent users are accused;
  \item[(ii)] with prob.\ at least $1 - \eps_2$ at least one colluder is accused.
\end{itemize}
Then, using almost the same scheme parameters as before\footnote{This disregards a small technical detail regarding the overshoot over the boundary $\eta$; see the discussion of $Z$ and $\tilde{Z}$ in \cite[Section III.C]{laarhoven13tit}. To be sure that the scheme still works we can disregard scores right after a user is removed from the system~\cite[Section II]{laarhoven13wifs} with a negligible increase in $\ell$. We omit details here, and only present the simplified result.}, with this sequential construction we can guarantee that
\begin{itemize}
  \item[(i)] with pr.\ at least $1 - 2 \eps_1$ no innocent users are accused;
  \item[(ii)] with prob.\ at least $1 - 2 \eps_2$ all colluders are accused.
\end{itemize}
\end{theorem}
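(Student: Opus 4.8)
The plan is to recast both guarantees in terms of the cumulative score walks $S_j(i_0) = \sum_{i \le i_0} g(x_{j,i}, y_i, p_i)$ and to read off the factor $2$ from a reflection / maximal-inequality argument. The first step is to pass to \emph{centred} coordinates: writing $\mu_0 = \expn(S_{j,i} \mid H_0)$ for the (negative) innocent drift, set $W_{i_0} = S_j(i_0) - \mu_0 i_0$, which under $H_0$ is a zero-mean martingale with independent increments. The sloped accusation boundary of Figure~\ref{fig:sketch2} is precisely the line $b(i_0) = \eta - \mu_0(\ell - i_0)$ running parallel to the innocent drift, and in the centred coordinates it becomes the \emph{horizontal} level $\eta' := \eta - \mu_0 \ell$. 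Two facts then hold simultaneously: the non-adaptive decision ``$S_j(\ell) > \eta$'' is exactly ``$W_\ell > \eta'$'', while the sequential decision ``$S_j$ crosses the boundary at some $i_0 \le \ell$'' is exactly ``$\max_{i_0 \le \ell} W_{i_0} > \eta'$''. This is the whole reason for choosing the boundary parallel to $\mu_0$.

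For the soundness claim (i), I would fix an innocent user and bound $\pr(\max_{i_0 \le \ell} W_{i_0} > \eta' \mid H_0)$. Since $W$ is a martingale started at $0$, the reflection principle (equivalently Doob's maximal inequality in its sharp Brownian form) gives $\pr(\max_{i_0 \le \ell} W_{i_0} > \eta' \mid H_0) \le 2\,\pr(W_\ell > \eta' \mid H_0)$. The right-hand side is the per-user non-adaptive false-positive probability, so summing over the at most $n$ innocent users and invoking hypothesis (i) shows the sequential false-positive probability is at most $2\eps_1$. The only non-cosmetic point here is that the exact factor $2$ is the Brownian reflection identity; for the discrete, asymmetric score increments it holds only up to the overshoot over $\eta'$, which is controlled by the Gaussian approximation valid for $\ell \propto c^2 \log n$ (this is the caveat deferred to the footnote on $Z$ versus $\tilde Z$).

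For the completeness claim (ii) the same reflection bound is applied to the \emph{guilty} walks, but now the disconnection rule must be used to turn ``at least one caught'' into ``all caught''. The key structural input is a monotonicity statement: once a colluder crosses the boundary and is disconnected, the collusion channel acts on a strictly smaller active coalition, which only increases the positive drift $\mu_1 - \mu_0 > 0$ of every remaining colluder's walk relative to the boundary. Hence each surviving colluder's score in the sequential dynamics stochastically dominates the score it would have in the static scheme. I would use this domination to bound, colluder by colluder, the probability that a given colluder never crosses $b$ before time $\ell$ by the corresponding static first-passage failure, and then invoke the reflection principle once more, comparing the sequential first-passage event to the static endpoint event, so that hypothesis (ii) supplies the static miss probability $\eps_2$ and the reflection step the remaining factor $2$, yielding overall failure probability at most $2\eps_2$.

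The main obstacle is exactly this last reduction. Disconnection couples the colluders' walks — removing one colluder changes the output symbols, hence the increments of all the others — so the guilty walks are neither independent nor do they have stationary increments, and one cannot simply union-bound naively without losing a factor of $c$ and compounding the reflection factor over successive catches. The crux is therefore to make the monotonicity and stochastic-domination argument precise enough that a single global factor $2$ survives rather than a per-colluder factor; once that coupling is handled, the Brownian reflection identity and the Gaussian control of the overshoot finish both parts symmetrically.
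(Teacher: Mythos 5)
Note first that this paper does not prove Theorem~\ref{thm:adaptive1} at all: it is quoted from \cite{laarhoven13tit}, so your attempt has to be measured against the proof given there. Your part (i) is essentially that proof. Centring the walks so that the boundary of Figure~\ref{fig:sketch2} becomes horizontal, conditioning on the first crossing time, and arguing that the remaining mean-zero walk ends nonnegative with probability about $\tfrac12$ is exactly the halving argument of \cite{laarhoven13tit}; the mean-versus-median and overshoot corrections are precisely the $Z$ vs.\ $\tilde Z$ detail that the footnote of the theorem also waves away. Two small caveats: for a general asymmetric mean-zero walk the inequality $\pr(\max_{i_0 \le \ell} W_{i_0} > \eta') \le 2\,\pr(W_\ell > \eta')$ is \emph{not} a theorem (L\'evy's reflection inequality needs symmetric increments or medians), so strictly you must run the conditional halving argument rather than cite ``reflection''; and hypothesis (i) bounds the probability of the \emph{union} over innocent users, not the sum of per-user tail probabilities, so the reduction should be stated with the per-user bound $\eps_1/n$ (which is how (i) is established in the first place). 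Neither of these is fatal.

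Part (ii), however, has a genuine gap, and it is not the coupling subtlety you flag at the end but the step before it: hypothesis (ii) provides \emph{no per-colluder guarantee}, so ``bound, colluder by colluder, the probability that a given colluder never crosses $b$ by the corresponding static first-passage failure, with hypothesis (ii) supplying the miss probability $\eps_2$'' is simply not available. Under an asymmetric attack in which $c-1$ colluders form the output among themselves and shield the remaining member, the shielded colluder's symbols are conditionally independent of $\vc{y}$ given $\vc{p}$, so his score walks like an innocent's (zero or negative drift) and his static accusation probability is essentially $0$ --- yet hypothesis (ii) still holds, because one of the active colluders is caught. This also falsifies your monotonicity premise that disconnection ``increases the positive drift of \emph{every} remaining colluder'': a shielded colluder keeps innocent-like drift as long as at least two others remain active. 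The proof in \cite{laarhoven13tit} is structured entirely differently on this side: it tracks the \emph{aggregate} coalition score, frozen at disconnection times, uses the marking assumption to show that this aggregate drifts upward at a per-segment rate bounded below \emph{independently of the size of the active coalition}, applies the same tail bound as in the static analysis to this sum process, and converts ``aggregate score large'' into ``all colluders over the boundary'' by accounting for overshoots. That accounting --- not reflection --- produces the constants; indeed for guilty users reflection is pointed the wrong way and is unnecessary, since $\{W_\ell > \eta'\} \subseteq \{\max_{i_0 \le \ell} W_{i_0} > \eta'\}$ already gives the needed one-sided comparison for free. Catching colluders one by one via the forced positive drift of the \emph{active sum} is the key idea your proposal is missing, and without it no per-colluder domination argument can close the proof.
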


In practice, this means that to turn a non-adaptive scheme into a sequential scheme that provably finds all colluders, we just have to replace $\eps_1$ and $\eps_2$ by $\frac{1}{2} \eps_1$ and $\frac{1}{2} \eps_2$ in the formulas for $\ell$ and $\eta$ of the non-adaptive setting. Since $\ell$ only depends logarithmically on $\eps_1$ and $\eps_2$, for large $n$ and $c$ the resulting increase in the code length is negligible.

\subsection{Sequential variants}

While the above sequential scheme deals well with the setting where $c$ is known and users can be accused after every position $i$, the paper~\cite{laarhoven13tit} also discussed slight variations of this setting, which may well appear in practice. In particular, the two problems of not being able to cut off users after every segment $i$, and not knowing $c$, were addressed in~\cite[Sections IV and V]{laarhoven13tit}.

\subsubsection{Weakly sequential decoding}

To make tracing harder, pirates may delay the pirate output, so that a user whose score exceeds $\eta$ at time $i_0$ can only be disconnected at time, say, $i_0 + B$. As we are now quite certain that he is guilty, and since he contributed to segments $i_0 + 1, \dots, i_0 + B$, we could consider these segments \textit{tainted} and disregard them completely for tracing the remaining colluders. This solution was proposed in~\cite[Section IV.A]{laarhoven13tit} and it was shown to lead to a moderate increase in the code length of $(c - 1)B$. A different analysis in~\cite[Section IV.B]{laarhoven13tit} showed that one can also perform a new study of the possible overshoot over the boundary $\eta$, due to the increase $B$, leading to a higher increase in the code length. Therefore the solution from~\cite[Sect.~IV.A]{laarhoven13tit} should be preferred.

\subsubsection{Universal sequential decoding}

As for the setting where $c$ is unknown and only a crude estimate $c_0$ is known (or no bound is known at all), \cite[Section V]{laarhoven13tit} proposed a method where each user is assigned several scores $S_j^{(1)}, \dots, S_j^{(c_0)}$ based on how large the coalition is estimated to be, and disconnecting a user as soon as one of his scores crosses one of the corresponding boundaries $\eta^{(1)}, \dots, \eta^{(c_0)}$. It was noted in~\cite{laarhoven13tit} that the scores are very similar and the boundaries seem to correspond to a continuous function $\eta(i_0) \propto \sqrt{i_0}$. One of the open problems posed in~\cite[Section VII.B]{laarhoven13tit} was therefore whether schemes with single scores and curved boundaries are provably secure.

\subsubsection{Joint decoding}

Finally, another topic often considered in the fingerprinting literature is joint decoding~\cite{amiri09, berchtold12, charpentier09, huang12, meerwald11, meerwald12, moulin08, oosterwijk14, skoric14}: using the entire code $\mathcal{X}$, rather than only the user's code word $\vc{x}_j$, to decide whether user $j$ should be accused. Assigning scores to tuples of users was considered before in e.g.\ \cite{oosterwijk14}, but no explicit decision criterion with provable results was provided, and it was left as an open problem.

\section{Wald's scheme}
\label{sec:wald}

\subsection{Sequential scheme}

To understand the motivation behind the sequential Tardos scheme, and to see how the design can possibly be improved, we now turn our attention to what has long been known in statistics literature to be a solution for hypothesis testing in sequential settings: Wald's sequential probability ratio test (SPRT). This scheme originated in the 1940s~\cite{wald45, wald47}, and countless follow-up works have appeared since, which have been summarized in many books on this topic~\cite{bartroff13, chernoff72, govindarajulu04, jennison00, mukhopadhyay09, siegmund85, wald47, wetherill86}. 

Let us recall the formulation of the fingerprinting problem in terms of hypothesis testing, where we want to distinguish between the following two hypotheses:
\begin{itemize}
	\item $H_0$: user $j$ is innocent ($j \notin \mathcal{C}$).
	\item $H_1$: user $j$ is guilty ($j \in \mathcal{C}$).
\end{itemize}
Now, to decide between these two hypotheses in sequential settings, Wald proposed the following procedure. Let $\eta_1$ and $\eta_0$ be two constants, with $\eta_1 > 0 > \eta_0$, and again let us use the optimal log-likelihood score function $g$ from~\eqref{eq:g}. Now we decide in favor of $H_1$ as soon as a user's cumulative score exceeds $\eta_1$, and we decide to accept $H_0$ as soon as the user's score drops below $\eta_0$. As long as a user score stays in the interval $[\eta_0, \eta_1]$, we continue testing. This accusation procedure is sketched in Figure~\ref{fig:sketch3}. 

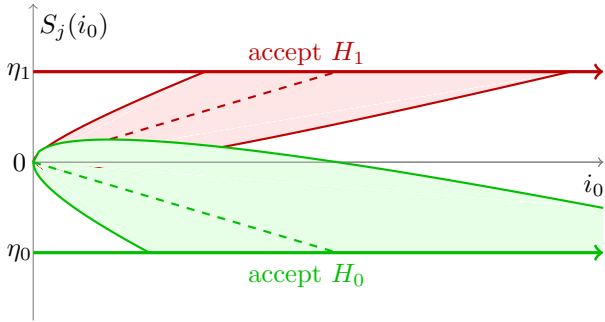
\begin{figure}[t]
\begin{tikzpicture}[scale = 0.6]
\draw [black!50!white, ->] (0, -3.5) -- (0, 3.5);
\draw [black!50!white, ->] (0, 0) -- (12.5, 0);
\node at (0.9, 3.0) {$S_j(i_0)$};
\node at (12.3, -0.5) {$i_0$};
\node at (-0.3, 0) {$0$};
\node at (-0.3, 2) {$\eta_1$};
\node at (-0.3, -2) {$\eta_0$};
\shade [left color=red!10, right color=red!10, line width=0pt] 
	(0, 0) -- (6.667, 2) -- (3.772, 2) --
	plot [smooth, samples=100, domain=0:3.772] (\x, {0.3*\x + sqrt(0.2*\x)}) -- cycle;	
\shade [left color=red!10, right color=red!10, line width=0pt] 
	(0, 0) -- (6.667, 2) -- (11.784, 2) --
	plot [smooth, samples=100, domain=0:11.784] (\x, {0.3*\x - sqrt(0.2*\x)}) -- cycle;
\draw [red!75!black, samples=100, domain=0:3.772, thick] 
	plot(\x, {0.3*\x + sqrt(0.2*\x)});
\draw [red!75!black, samples=100, domain=0:6.667, thick, dashed] 
	plot(\x, {0.3*\x});
\draw [red!75!black, samples=100, domain=0:11.784, thick] 
	plot(\x, {0.3*\x - sqrt(0.2*\x)});
\shade [left color=green!10, right color=green!10] 
	(0, 0) -- (6.667, -2) -- (12.5, -2) -- (12.5, -1.015) --
	plot [smooth, samples=100, domain=0:12.5] (\x, {-0.3*\x + sqrt(0.6*\x)}) -- cycle;
\shade [left color=green!10, right color=green!10] 
	(0, 0) -- (6.667, -2) -- (2.546, -2) --
	plot [smooth, samples=100, domain=0:2.546] (\x, {-0.3*\x - sqrt(0.6*\x)}) -- cycle;	
\draw [black!50!white, ->] (0, 0) -- (12.5, 0);
\draw [green!75!black, samples=100, domain=0:12.50, thick] 
	plot(\x, {-0.3*\x + sqrt(0.6*\x)});
\draw [green!75!black, samples=100, domain=0:6.667, thick, dashed] 
	plot(\x, {-0.3*\x});
\draw [green!75!black, samples=100, domain=0:2.546, thick] 
	plot(\x, {-0.3*\x - sqrt(0.6*\x)});
\draw [very thick, red!75!black, ->] (0, 2.0) -- (12.5, 2.0);
\node at (6, 2.4) {\color{red!75!black} accept $H_1$};
\draw [very thick, green!75!black, ->] (0, -2.0) -- (12.5, -2.0);
\node at (6, -2.5) {\color{green!75!black} accept $H_0$};
\end{tikzpicture}
\caption{Wald's SPRT construction. As soon as a user's score leaves the interval $[\eta_0, \eta_1]$, he is marked innocent (below $\eta_0$) or guilty (above $\eta_1$). \label{fig:sketch3}}
\end{figure}

\textbf{Choosing the thresholds.} To understand how the parameters $\eta_0$ and $\eta_1$ should be chosen, a connection is often made with the continuous-time analog of random walks, Brownian motions. Assuming that user scores are continuous, so that when a score crosses one of the boundaries it really \textit{hits} the boundary (rather than jumping over it, in the discrete model), then to guarantee that an innocent user is acquitted with probability at least $1 - \eps_1'$ and a guilty user is accused with probability at least $1 - \eps_2'$, the following choice is optimal:
\begin{align}
\eta_0 = \ln\left(\frac{\eps_2'}{1 - \eps_1'}\right), \qquad \eta_1 = \ln\left(\frac{1 - \eps_2'}{\eps_1'}\right). \label{eq:eta-agg}
\end{align}
To guarantee that all innocent users are acquitted and all guilty users are found, we need to let $\eps_1' = O(\tfrac{1}{n})$ and $\eps_2' = O(\tfrac{1}{c})$, which for large $c,n$ means $\eta_0 \sim -\ln c$ and $\eta_1 \sim \ln n$. For instance, writing $\eps_1' = \eps_1 / n$ and $\eps_2' = \eps_2/c$, so that the probability of not accusing innocents (accusing all guilties) is at least $1 - \eps_1$ ($1 - \eps_2$), this corresponds to taking
\begin{align}
\eta_0 = \ln\left(\frac{\eps_2/c}{1 - \eps_1/n}\right), \qquad \eta_1 = \ln\left(\frac{1 - \eps_2/c}{\eps_1/n}\right).
\end{align}

There are two important issues that we need to address, the first of which is that we are not dealing with continuous user scores but discrete scores. One of the effects of having discrete jumps in the scores is that there may be a slight \textit{overshoot} over one of the boundaries when a user is accused or acquitted; a score may cross one of the lines at a non-integral point so to say, and at the next measurement the score may significantly exceed $\eta_1$ or drop below $\eta_0$. As a result the error probabilities for the above choice of thresholds are not exact. A useful property of the above choice of parameters is that if by $\tilde{\eps}_1'$ and $\tilde{\eps}_2'$ we denote the \textit{real} probabilities of accusing innocent and guilty users, when using these thresholds $\eta_0$ and $\eta_1$, we have~\cite[Equation (3.30)]{wald45}
\begin{align}
\tilde{\eps}_1' + \tilde{\eps}_2' \leq \eps_1' + \eps_2'.
\end{align}
In other words, the total error probability does not increase, and at most one of $\eps_1'$ and $\eps_2'$ might increase. Alternatively, exact bounds on the error probabilities can be obtained, showing that the following slightly conservative choice of parameters guarantees that the error bounds are satisfied:
\begin{align}
\eta_0 = -\ln\left(1/\eps_2'\right), \qquad \eta_1 = \ln\left(1/\eps_1'\right). \label{eq:eta-cons}
\end{align}

The second issue that we should address is that having a threshold $\eta_0$ only makes sense if all colluders have an increasing score. If the colluders know about the tracing algorithm, and use an asymmetric pirate strategy, e.g.\ by letting one colluder be inactive at the start and letting him join in later, this colluder will incorrectly be acquitted early on. In this setting one could say that innocence is virtually impossibly to prove, while it is possible to prove that someone is guilty. To deal with this problem, a simple solution is not to use a lower threshold $\eta_0$ at all. This is equivalent to setting $\eps_2' = 0$, as that way we will never acquit a colluder. In that case, the conservative choice of thresholds from~\eqref{eq:eta-agg} can be stated as
\begin{align}
\eta_0 = -\infty, \qquad \eta_1 = \ln\left(1/\eps_1'\right). \label{eq:eta-no}
\end{align}
Note that in this case, the aggressive and conservative expressions from~\eqref{eq:eta-agg} and \eqref{eq:eta-cons} match, i.e., $\eps_1'$ is a tight bound on the probability of incorrectly accusing a single innocent user. This more realistic implementation of the sequential probability ratio test in the uninformed fingerprinting game is again sketched in Figure~\ref{fig:sketch4}.

\begin{figure}[t]
\begin{tikzpicture}[scale = 0.6]
\draw [black!50!white, ->] (0, -3.5) -- (0, 3.5);
\draw [black!50!white, ->] (0, 0) -- (12.5, 0);
\node at (0.9, 3) {$S_j(i_0)$};
\node at (12.3, -0.5) {$i_0$};
\node at (-0.3, 0) {$0$};
\node at (-0.3, 2) {$\eta_1$};
\shade [left color=red!10, right color=red!10, line width=0pt] 
	(0, 0) -- (6.667, 2) -- (3.772, 2) --
	plot [smooth, samples=100, domain=0:3.772] (\x, {0.3*\x + sqrt(0.2*\x)}) -- cycle;	
\shade [left color=red!10, right color=red!10, line width=0pt] 
	(0, 0) -- (6.667, 2) -- (11.784, 2) --
	plot [smooth, samples=100, domain=0:11.784] (\x, {0.3*\x - sqrt(0.2*\x)}) -- cycle;

\draw [red!75!black, samples=100, domain=0:3.772, thick] 
	plot(\x, {0.3*\x + sqrt(0.2*\x)});
\draw [red!75!black, samples=100, domain=0:6.667, thick, dashed] 
	plot(\x, {0.3*\x});
\draw [red!75!black, samples=100, domain=0:11.784, thick] 
	plot(\x, {0.3*\x - sqrt(0.2*\x)});
\shade [left color=green!10, right color=green!10] 
	(0, 0) -- (11.667, -3.5) -- (12.5, -3.5) -- (12.5, -1.011) --
	plot [smooth, samples=100, domain=0:12.5] (\x, {-0.3*\x + sqrt(0.6*\x)}) -- cycle;
\shade [left color=green!10, right color=green!10] 
	(0, 0) -- (11.667, -3.5) -- (5.572, -3.5) --
	plot [smooth, samples=100, domain=0:5.572] (\x, {-0.3*\x - sqrt(0.6*\x)}) -- cycle;	
\draw [black!50!white, ->] (0, 0) -- (12.5, 0);
\draw [green!75!black, samples=100, domain=0:12.50, thick] 
	plot(\x, {-0.3*\x + sqrt(0.6*\x)});
\draw [green!75!black, samples=100, domain=0:11.667, thick, dashed] 
	plot(\x, {-0.3*\x});
\draw [green!75!black, samples=100, domain=0:5.572, thick] 
	plot(\x, {-0.3*\x - sqrt(0.6*\x)});
\draw [very thick, red!75!black, ->] (0, 2.0) -- (12.5, 2.0);
\node at (6, 2.4) {\color{red!75!black} accept $H_1$};
\end{tikzpicture}
\caption{Wald's SPRT with no early innocent decisions. Setting $\eta_1 = \ln(1/\eps_1')$ guarantees that innocent users are never accused with probability at least $1 - \eps_1'$, while this design guarantees that $\eps_2 = 0$. \label{fig:sketch4}}
\end{figure}
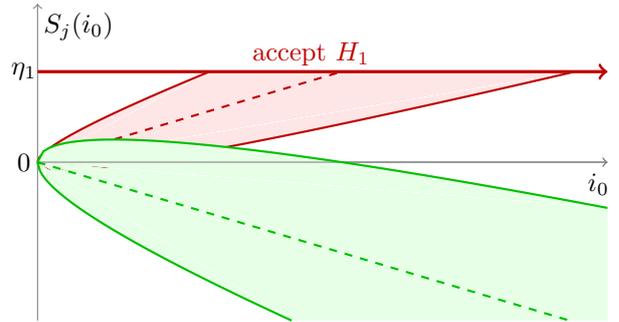

\textbf{Optimality of the SPRT.} Although reaching a decision with this procedure may theoretically take a very long time, Wald proved that his test procedure always terminates~\cite[Appendix~A]{wald47}, regardless of $\eps_1$ and $\eps_2$. Furthermore, if by $\mu_0$ ($\mu_1$) and $\sigma_0^2$ ($\sigma_1^2$) we denote the expected score in one segment for innocent (guilty) users, then we know that with high probability, the procedure will terminate not long after $i_0 \cdot \mu_0 + O(\sigma_0)$ ($i_0 \cdot \mu_1 + O(\sigma_1)$) crosses the boundary $\eta_0$ ($\eta_1$).

More formally, Wald analyzed the expected time by which his procedure terminates, under either $H_0$ or $H_1$, and together with Wolfowitz he proved~\cite{wald48} that his SPRT is optimal in that it minimizes the expected time before a decision is reached, both under $H_0$ and under $H_1$. Ignoring overshoots over the boundary (i.e., assuming we are dealing with continuous random walks), he further derived explicit expressions for both these expected termination times, which are stated below. In the following theorem, we write $\kl(a \| b) = a \ln(\frac{a}{b}) + (1 - a) \ln(\frac{1-a}{1-b})$ for the Kullback-Leibler divergence or relative entropy (in nats) between $a$ and $b$.

\begin{theorem} \cite{wald47, wald48} \label{thm:adaptive2}
Suppose we have a sequential test procedure, for which
\begin{itemize}
	\item an innocent user is accused w.p. at most $\eps_1'$;
	\item a guilty user is acquitted w.p. at most $\eps_2'$;
	\item the probability of termination is $1$.
\end{itemize}
Let $T$ denote the time at which a decision is reached. Then:
\begin{align}
\expn(T | H_0) &\geq \frac{1}{-\mu_0} \kl(\eps_1' \| 1 - \eps_2') \approx \frac{\ln(1/\eps_2')}{-\mu_0} \, , \\
\expn(T | H_1) &\geq \frac{1}{\mu_1} \kl(\eps_2' \| 1 - \eps_1') \approx \frac{\ln(1/\eps_1')}{\mu_1} \, .
\end{align}
Furthermore, the sequential probability ratio test is a sequential test simultaneously minimizing both $\expn(T_0)$ and $\expn(T_1)$, and assuming that there is no overshoot over the boundaries, both inequalities above are equalities for the SPRT. 
\end{theorem}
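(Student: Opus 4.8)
The plan is to exploit the fact that the per-segment score $g$ from~\eqref{eq:g} is exactly a log-likelihood increment, so that the cumulative score $S_j(T) = \sum_{i=1}^{T} g(x_i,y_i,p_i)$ is the log-likelihood ratio of all observations gathered up to the stopping time $T$. Under this identification the per-symbol drifts become information numbers, $-\mu_0 = \kl(f_{X,Y|P}(\cdot|H_0)\,\|\,f_{X,Y|P}(\cdot|H_1))$ and $\mu_1 = \kl(f_{X,Y|P}(\cdot|H_1)\,\|\,f_{X,Y|P}(\cdot|H_0))$, both positive. The first tool is Wald's identity: since the increments are i.i.d.\ and $T$ is a stopping time, $\expn(S_j(T)\mid H_0) = \mu_0\,\expn(T\mid H_0)$ and $\expn(S_j(T)\mid H_1) = \mu_1\,\expn(T\mid H_1)$. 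Rearranging gives $\expn(T\mid H_0) = \expn(-S_j(T)\mid H_0)/(-\mu_0)$ and $\expn(T\mid H_1) = \expn(S_j(T)\mid H_1)/\mu_1$, so the whole problem reduces to lower-bounding the expected terminal log-likelihood ratio under each hypothesis.

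The second, and central, step is to bound these terminal quantities using only the two error constraints. Let $R$ be the event that the procedure decides $H_1$; by assumption $\pr(R\mid H_0)\le\eps_1'$ and $\pr(R^c\mid H_1)\le\eps_2'$. Because $S_j(T)$ is the log-likelihood ratio on the stopped $\sigma$-algebra $\mathcal{F}_T$, the quantity $\expn(-S_j(T)\mid H_0)$ is precisely the relative entropy between the laws of the observations under $H_0$ and $H_1$ restricted to $\mathcal{F}_T$. The terminal decision is a binary coarsening of $\mathcal{F}_T$, so the log-sum inequality (equivalently, the data-processing inequality for relative entropy) collapses this to the divergence between the two Bernoulli laws induced on $\{R,R^c\}$, giving $\expn(-S_j(T)\mid H_0)\ge \kl(\pr(R\mid H_0)\,\|\,\pr(R\mid H_1))$ and symmetrically $\expn(S_j(T)\mid H_1)\ge \kl(\pr(R^c\mid H_1)\,\|\,\pr(R^c\mid H_0))$, now with the scalar $\kl$ of the statement. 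A short monotonicity check — $\kl(a\|b)$ decreases in $a$ and increases in $b$ throughout the regime $a<b$ into which the small error probabilities place us — lets me replace the realized errors by their worst admissible values, producing exactly $\kl(\eps_1'\|1-\eps_2')$ and $\kl(\eps_2'\|1-\eps_1')$. Dividing by $-\mu_0$ and $\mu_1$ yields the two displayed inequalities, and expanding $\kl$ while discarding the lower-order terms gives the stated approximations $\ln(1/\eps_2')/(-\mu_0)$ and $\ln(1/\eps_1')/\mu_1$.

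For the equality claim I would specialize to the SPRT with Wald's thresholds $\eta_0=\ln(\eps_2'/(1-\eps_1'))$ and $\eta_1=\ln((1-\eps_2')/\eps_1')$ from~\eqref{eq:eta-agg}. In the continuous (Brownian) idealization there is no overshoot, so the stopped score lands exactly on a boundary, $S_j(T)\in\{\eta_0,\eta_1\}$, hit under $H_0$ with the design probabilities $\eps_1'$ and $1-\eps_1'$; then $\expn(S_j(T)\mid H_0)=\eps_1'\eta_1+(1-\eps_1')\eta_0$ evaluates to exactly $-\kl(\eps_1'\|1-\eps_2')$, and Wald's identity turns the first inequality into an equality, with the $H_1$ case following from the identical computation. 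The remaining assertion — that the SPRT simultaneously minimizes $\expn(T\mid H_0)$ and $\expn(T\mid H_1)$ among all tests with no larger error probabilities — is the Wald--Wolfowitz optimality theorem, which I would invoke from~\cite{wald48}; this is the genuinely hard part, falling outside the Wald-identity machinery, as it requires either a Bayesian reduction to an optimal-stopping problem or a direct pathwise comparison of an arbitrary competing test against the SPRT. The only other point to guard is technical: to apply Wald's identity one needs $\expn(T\mid H_h)<\infty$, which, when it fails, makes the bound vacuous, and which otherwise combines with the finite mean of $g$ to validate the identity.
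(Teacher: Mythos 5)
The paper contains no internal proof of Theorem~\ref{thm:adaptive2}: it is imported directly from Wald~\cite{wald47} and Wald--Wolfowitz~\cite{wald48}, so the only comparison is against the classical argument in those sources --- and your reconstruction is essentially that argument, carried out correctly. The chain you propose (Wald's identity $\expn(S_j(T)\mid H_0)=\mu_0\,\expn(T\mid H_0)$ and $\expn(S_j(T)\mid H_1)=\mu_1\,\expn(T\mid H_1)$; identification of the expected stopped score with the relative entropy of the restricted laws on $\mathcal{F}_T$; data-processing down to the binary divergence on the decision event; then monotonicity of $\kl(a\|b)$, decreasing in $a$ and increasing in $b$ in the regime $a<b$, to pass from the realized error probabilities to $\eps_1'$ and $1-\eps_2'$) is exactly how these lower bounds are established in the sequential-analysis literature, and your no-overshoot computation $\expn(S_j(T)\mid H_0)=\eps_1'\eta_1+(1-\eps_1')\eta_0=-\kl(\eps_1'\,\|\,1-\eps_2')$ correctly gives the equality claim for the SPRT with the thresholds of~\eqref{eq:eta-agg}. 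You also correctly isolate the one genuinely deep ingredient --- that the SPRT \emph{simultaneously} minimizes both expected stopping times --- and defer it to~\cite{wald48}, which is precisely what the paper does for the whole theorem; what your write-up buys over the paper's citation is a self-contained derivation of the quantitative bounds, at the cost of re-proving only what is elementary. Two minor points: when $\expn(T\mid H_0)$ or $\expn(T\mid H_1)$ is infinite the stated inequality holds automatically (so the finiteness caveat for Wald's identity costs nothing and is better phrased as ``the inequality is trivial'' rather than ``the bound is vacuous''), and your step equating the stopped score with the log Radon--Nikodym derivative of the measures restricted to $\mathcal{F}_T$ deserves a one-line justification (Wald's likelihood-ratio identity for stopping times), though this is a standard fact and not a gap.
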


For large $n$, the per-user false positive error probability scales as $\eps_1' = \Theta(1/n)$ while $\eps_2' = \Theta(1)$ based on the argument that if the average pirate score exceeds $\eta_1$, all pirate scores exceed $\eta_1$~\cite{laarhoven13tit}. We further have that
\begin{align}
\mu_1 &= \expn(S_{j,i} | H_1) \\
&= \expn_P \sum_{x,y} f_{X,Y|P}(x,y|p,H_1) \ln \frac{f_{X,Y|P}(x,y|p,H_1)}{f_{X,Y|P}(x,y|p,H_0)} \\
&= (\ln 2) \expn_P I(X_1; Y|P),
\end{align}
where $I(X_1, Y|P = p)$ is the mutual information between a pirate symbol and the pirate output, as described in e.g.\ \cite{huang12, laarhoven14ihmmsec}. This leads to the following corollary.

\begin{theorem} \label{thm:adaptive3}
For sequential tests satisfying the conditions stated in Theorem~\ref{thm:adaptive2}, we have:
\begin{align}
\expn(T | H_1) &\gtrsim \frac{\log_2 n}{\expn_P I(X_1; Y|P)} \, .
\end{align}
\end{theorem}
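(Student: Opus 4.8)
The plan is to treat this statement as a direct corollary of Theorem~\ref{thm:adaptive2}, specializing the general lower bound on $\expn(T \mid H_1)$ to the fingerprinting regime by substituting the two asymptotic identities recorded in the paragraph preceding the statement. Concretely, the two ingredients I would combine are: (a) the inequality $\expn(T \mid H_1) \geq \frac{1}{\mu_1}\kl(\eps_2' \| 1 - \eps_1')$ from Theorem~\ref{thm:adaptive2}, which holds for \emph{any} admissible sequential test; and (b) the identity $\mu_1 = (\ln 2)\,\expn_P I(X_1; Y \mid P)$ together with the scaling $\eps_1' = \Theta(1/n)$. The whole argument is then a short chain of estimates turning the abstract bound into the information-theoretic one.

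First I would control the Kullback--Leibler term. Expanding its definition gives
\begin{align}
\kl(\eps_2' \| 1 - \eps_1') = \eps_2' \ln\frac{\eps_2'}{1 - \eps_1'} + (1 - \eps_2')\ln\frac{1 - \eps_2'}{\eps_1'}.
\end{align}
As $\eps_1' \to 0$ the first summand converges to the bounded quantity $\eps_2'\ln\eps_2'$, while the second equals $(1-\eps_2')\ln(1/\eps_1') + (1-\eps_2')\ln(1-\eps_2')$; hence the only term growing with $n$ is $(1-\eps_2')\ln(1/\eps_1')$, so that $\kl(\eps_2' \| 1 - \eps_1') = (1-\eps_2')\ln(1/\eps_1') + O(1)$. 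This is the formal content behind the approximation $\kl(\eps_2' \| 1 - \eps_1') \approx \ln(1/\eps_1')$ of Theorem~\ref{thm:adaptive2}, and it is exactly where the regime $\eps_2' = \Theta(1)$ enters: because all colluder scores are tied together and we only require the \emph{average} colluder score to cross $\eta_1$, we may fix $\eps_2'$ at a small constant, so that $1 - \eps_2'$ is close to $1$ and is absorbed into the $\gtrsim$.

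Next I would substitute the two scalings. Writing $\eps_1' = \Theta(1/n)$ gives $\ln(1/\eps_1') = \ln n + O(1) \sim \ln n$, so that $\kl(\eps_2' \| 1 - \eps_1') \gtrsim \ln n$. Inserting $\mu_1 = (\ln 2)\,\expn_P I(X_1; Y \mid P)$ and changing the base of the logarithm via $\ln n / \ln 2 = \log_2 n$ then yields
\begin{align}
\expn(T \mid H_1) \geq \frac{\kl(\eps_2' \| 1 - \eps_1')}{\mu_1} \gtrsim \frac{\ln n}{(\ln 2)\,\expn_P I(X_1; Y \mid P)} = \frac{\log_2 n}{\expn_P I(X_1; Y \mid P)},
\end{align}
which is the claimed bound.

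The genuinely delicate step is not the algebra but justifying the two places where $\gtrsim$ hides nonnegligible factors. The coefficient $1 - \eps_2'$ is only $\approx 1$ once one accepts the colluder-symmetry reduction that lets $\eps_2'$ be taken as a small constant rather than the $\Theta(1/c)$ of the conservative analysis; without that reduction the bound would carry an extra constant. Equally, Theorem~\ref{thm:adaptive2} attains equality only in the continuous, no-overshoot idealization, so the estimate I derive is an asymptotic lower bound valid only up to the discreteness corrections discussed there. I would therefore present the result explicitly as a leading-order-in-$n$ statement, with the symbol $\gtrsim$ understood to encode precisely these two caveats.
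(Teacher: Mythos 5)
Your proposal is correct and takes essentially the same route as the paper: the result is obtained as a direct corollary of Theorem~\ref{thm:adaptive2} by substituting $\eps_1' = \Theta(1/n)$ (so $\kl(\eps_2' \| 1 - \eps_1') \approx \ln(1/\eps_1') \sim \ln n$) and the identity $\mu_1 = (\ln 2)\,\expn_P I(X_1; Y|P)$ into the bound $\expn(T \mid H_1) \geq \kl(\eps_2' \| 1 - \eps_1')/\mu_1$, exactly as the paper does in the paragraph preceding the statement. Your explicit expansion of the Kullback--Leibler term and your remark about absorbing the factor $1 - \eps_2'$ into the $\gtrsim$ simply make precise what the paper leaves implicit in its ``$\approx$''.
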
 

This result implies that in general, sequentiality does not lead to a decrease in the asymptotic code length; with non-adaptive schemes it is also possible to achieve this asymptotic code length~\cite{laarhoven14ihmmsec}. The two gains of sequential testing are that (i) in fact \textit{all} colluders, rather than at least one of them, can provably be caught with this asymptotic code length; and (ii) in practice, for finite $c$ and $n$, the time needed to find and trace all colluders will generally be shorter than in the non-adaptive setting. Although the asymptotic code length are the same, the convergence to this limit is significantly faster for sequential schemes than for non-adaptive schemes.
 
While most of the analyses and results above are based on running this scheme with parallel infinite boundaries, it is not impossible to force an early decision. As already described by Wald~\cite[Section 3.8]{wald47}, one might ultimately prefer to \textit{truncate} the test procedure at some fixed time $\ell$, at which we make a decision similar to the sequential Tardos scheme, and similar to the non-adaptive setting. This may be done with and without a lower boundary; a sketch for the case with a lower boundary is given in Figure~\ref{fig:sketch5}. Analyzing these variants rigorously seems difficult, even with Brownian approximations, but an interested reader may refer to e.g.\ one of the books on sequential testing listed at the beginning of this section. With truncation, one should ask the question whether forcing a decision by some fixed time $\ell$ is really important. After all, if the main goal is to minimize the \textit{worst-case} code length $\ell$ needed to make a decision, then it is commonly best to wait until the very end and to take all evidence into account before making any decisions at all; which exactly corresponds to the non-adaptive setting. 

\begin{figure}[t]
\begin{tikzpicture}[scale = 0.6]
\draw [black!50!white, ->] (0, -3.5) -- (0, 3.5);
\draw [black!50!white, ->] (0, 0) -- (12.2, 0);
\node at (0.9, 3) {$S_j(i_0)$};
\node at (12.0, -0.5) {$i_0$};
\node at (-0.3, 0) {$0$};
\node at (-0.3, 2) {$\eta_1$};
\node at (-0.3, -2) {$\eta_0$};
\shade [left color=red!10, right color=red!10, line width=0pt] 
	(0, 0) -- (6.667, 2) -- (3.772, 2) --
	plot [smooth, samples=100, domain=0:3.772] (\x, {0.3*\x + sqrt(0.2*\x)}) -- cycle;	
\shade [left color=red!10, right color=red!10, line width=0pt] 
	(0, 0) -- (6.667, 2) -- (10, 2) -- (10, 1.586) -- 
	plot [smooth, samples=100, domain=0:10] (\x, {0.3*\x - sqrt(0.2*\x)}) -- cycle;
\draw [red!75!black, samples=100, domain=0:3.772, thick] 
	plot(\x, {0.3*\x + sqrt(0.2*\x)});
\draw [red!75!black, samples=100, domain=0:6.667, thick, dashed] 
	plot(\x, {0.3*\x});
\draw [red!75!black, samples=100, domain=0:10.00, thick] 
	plot(\x, {0.3*\x - sqrt(0.2*\x)});
\shade [left color=green!10, right color=green!10] 
	(0, 0) -- (6.667, -2) -- (10, -2) -- (10, -0.551) --
	plot [smooth, samples=100, domain=0:10] (\x, {-0.3*\x + sqrt(0.6*\x)}) -- cycle;
\shade [left color=green!10, right color=green!10] 
	(0, 0) -- (6.667, -2) -- (2.546, -2) --
	plot [smooth, samples=100, domain=0:2.546] (\x, {-0.3*\x - sqrt(0.6*\x)}) -- cycle;	
\draw [black!50!white, ->] (0, 0) -- (12.2, 0);
\draw [green!75!black, samples=100, domain=0:10.00, thick] 
	plot(\x, {-0.3*\x + sqrt(0.6*\x)});
\draw [green!75!black, samples=100, domain=0:6.667, thick, dashed] 
	plot(\x, {-0.3*\x});
\draw [green!75!black, samples=100, domain=0:2.546, thick] 
	plot(\x, {-0.3*\x - sqrt(0.6*\x)});
\draw [very thick, red!75!black] (0, 2.0) -- (10, 2.0);
\draw [very thick, red!75!black] (10, 2.0) -- (10, 0.5);
\node at (6, 2.4) {\color{red!75!black} accept $H_1$};
\draw [very thick, green!75!black] (10, 0.5) -- (10, -2.0);
\draw [very thick, green!75!black] (10, -2.0) -- (0, -2.0);
\node at (6, -2.5) {\color{green!75!black} accept $H_0$};
\node at (10, 0.5) {\color{blue}\textbullet};
\node at (10.8, 0.5) {\color{blue} $(\ell, \eta)$};
\end{tikzpicture}
\caption{Wald's SPRT with truncated thresholds, guaranteeing a decision after at most $\ell$ segments. For small $\ell$, both $\eps_1$ and $\eps_2$ will increase. \label{fig:sketch5}}
\end{figure}
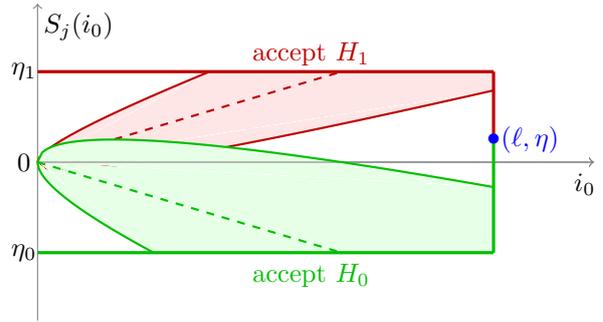

\subsection{Sequential variants}

The SPRT has received extensive attention in the literature, with thorough analyses of the effects of the overshoot over the boundaries, slight modifications of the scheme (such as the truncated SPRT mentioned above), and the effects of using different boundaries than the horizontal lines in the figures above. We highlight two variants which we also considered for the sequential Tardos scheme, and we consider how joint decoding may be done with the SPRT. For further details we refer the interested reader to e.g.~\cite{bartroff13, chernoff72, govindarajulu04, jennison00, mukhopadhyay09, siegmund85, wald47, wetherill86}.

\subsubsection{Weakly sequential decoding}

In the setting of weakly adaptive decoding, where pirates delay their rebroadcast of the content (or where content is sent out in blocks of size $B$), the results based on continuous approximations using Brownian motions become less and less accurate. For higher values of $B$, the overshoot over the boundary becomes more and more significant, which was also discussed in~\cite[Section IV.B]{laarhoven13tit}; there the possible overshoot was parametrized by $\tilde{Z}_B - Z$, and it was noted that exactly this overshoot causes problems.

To deal with this problem effectively, we can again use the method described in~\cite[Section IV.A]{laarhoven13tit}: ignore the tainted segments $i$ to which a user who is now deemed guilty may have contributed. Then the increase in the code length may again only be $(c-1)B$, which in the uninformed fingerprinting game is negligible with respect to $\ell \propto c^2 \log n$.

\subsubsection{Universal decoding}

Recall that in the universal decoding setting, we assume that $c$ is unknown, and only a crude bound $c_0 \gg c$ may be known. To deal with this, Laarhoven et al.~\cite{laarhoven13tit} proposed to keep multiple scores per user, and multiple accusation boundaries. It was conjectured that using a single score for each user, and using a curved boundary function of the form $\eta_1(i_0) \propto \sqrt{i_0}$ may be possible.

In terms of hypothesis testing, testing whether $j \in \mathcal{C}$ or $j \notin \mathcal{C}$ for unknown coalition sizes $c$ could be considered a test of a simple null hypothesis $H_0: \mu = \mu_0$ against a one-sided alternative $H_1: \mu > \mu_0$. In the informed setting, where the collusion channel is known, we might know exactly what $\mu_0$ is, and so such a one-sided test may form a solution. In that case, curved stopping boundaries (in particular, having a boundary of the shape $\eta_1(i_0) \propto \sqrt{i_0}$) has been suggested before; see e.g.~\cite[Chapter IV]{siegmund85}. When using the symmetric Tardos score function rather than the optimized log-likelihood ratios or MAP decoders, this approach may work well, although the issue remains that it seems that no single encoder and decoder can be used for arbitrary $c$ and $\boldsymbol{\theta}$: in all known cases, either the decoder depends on $c$ or $c_0$, or the encoder uses a cutoff which depends on $c$. 

To work with different score functions than Tardos' score function~\cite{tardos08} and \v{S}kori\'{c} et al.'s symmetric score function~\cite{skoric08}, where $\mu_0$ may be considered fixed, we need to circumvent the issue that $\mu_0$ may depend on $c$ and the pirate strategy $\boldsymbol{\theta}$ as well. In the universal uninformed decoding setting we therefore do not even know what $\mu_0$ is. Two hypotheses that may be more realistic to consider are $H_0: \mu \leq 0$ against $H_1: \mu > 0$: an innocent user will have a negative average score, while a guilty user will have a positive average score. However, depending on the collusion strategy, the values of $\mu_0$ and $\mu_1$ may both be small or large. This does not really help the colluders, as decreasing $|\mu_0|$ and $|\mu_1|$ also leads to a decrease in the variance of the scores, but it makes using a single linear decoder even more problematic.

To deal with these problems, the best solution for the universal setting may be to use a generalized linear decoder~\cite{abbe12, desoubeaux13, meerwald12}, and to normalize the scores during the decoding phase, as described in~\cite{laarhoven13wifs}. A generalized linear decoder is better suited for the setting of unknown $c$, and by normalizing user scores (which can be done based on $\mathcal{X}, \vc{p}, \vc{y}$) we know what $\mu_0$ is. Then we can again use a hypothesis test of the form $H_0: \mu = \mu_0$ against $H_1: \mu > \mu_0$ as described above, where a curved boundary may be optimal. 

\subsubsection{Joint decoding}

Recall that in joint decoding the entire code $\mathcal{X}$ is taken into account to decide whether users should be accused. In~\cite{oosterwijk14} it was considered to assign scores to tuples $\mathcal{T}$ of $t \leq c$ users, after which one would like to distinguish between the following $t + 1$ hypotheses:
\begin{itemize}
	\item $H_0$: tuple $\mathcal{T}$ contains no guilty users: $|\mathcal{T} \cap \mathcal{C}| = 0$;
	\item $H_1$: tuple $\mathcal{T}$ contains one guilty user: $|\mathcal{T} \cap \mathcal{C}| = 1$;
	\item $\dots$
	\item $H_t$: tuple $\mathcal{T}$ contains $t$ guilty users: $|\mathcal{T} \cap \mathcal{C}| = t$.
\end{itemize}
Although not quite as well studied as the case of two hypotheses, this topic has also received attention in SPRT literature, with the earliest work dating back to Sobel and Wald from 1949~\cite{sobel49}. They considered the problem of deciding between three simple hypotheses ($t = 2$), and provided a solution as sketched in Figure~\ref{fig:sketch6}. Using joint Neyman-Pearson decoder to assign scores to pairs of users, they considered the use of several stopping boundaries, each corresponding to a decision of accepting one of the hypotheses. The distribution of scores then depends on whether the tuple contains $0, 1$ or $2$ colluders, as illustrated by the green, yellow, and red highlighted curves in Figure~\ref{fig:sketch6}. This procedure can be generalized to multiple hypotheses as well, to deal with joint decoding with $c > 2$. For details on how to choose these stopping boundaries, see e.g.\ \cite{sobel49}.

\begin{figure}[t]
\begin{tikzpicture}[scale = 0.6]
\draw [black!50!white, ->] (0, -4) -- (0, 4);
\draw [black!50!white, ->] (0, 0) -- (12.2, 0);
\node at (0.9, 3.5) {$S_j(i_0)$};
\node at (12, -0.5) {$i_0$};
\node at (-0.3, 0) {$0$};
\node at (-0.3, 1) {$\eta_2$};
\node at (-0.3, -1) {$\eta_0$};
\shade [left color=red!10, right color=red!10, line width=0pt] 
	(0, 0) -- (3.333, 2) -- (1.111, 1.333) --
	plot [smooth, samples=100, domain=0:1.111] (\x, {0.6*\x + sqrt(0.4*\x)}) -- cycle;	
\shade [left color=red!10, right color=red!10, line width=0pt] 
	(0, 0) -- (3.333, 2) -- (10, 4) --
	plot [smooth, samples=100, domain=0:10] (\x, {0.6*\x - sqrt(0.4*\x)}) -- cycle;
\draw [red!75!black, samples=100, domain=0:1.111, thick] 
	plot(\x, {0.6*\x + sqrt(0.4*\x)});
\draw [red!75!black, samples=100, domain=0:3.333, thick, dashed] 
	plot(\x, {0.6*\x});
\draw [red!75!black, samples=100, domain=0:10.00, thick] 
	plot(\x, {0.6*\x - sqrt(0.4*\x)});
\shade [left color=green!10, right color=green!10] 
	(0, 0) -- (3.333, -2) -- (10, -4) --
	plot [smooth, samples=100, domain=0:10] (\x, {-0.6*\x + sqrt(0.4*\x)}) -- cycle;
\shade [left color=green!10, right color=green!10] 
	(0, 0) -- (3.333, -2) -- (1.111, -1.333) --
	plot [smooth, samples=100, domain=0:1.111] (\x, {-0.6*\x - sqrt(0.4*\x)}) -- cycle;	
\draw [green!75!black, samples=100, domain=0:10.00, thick] 
	plot(\x, {-0.6*\x + sqrt(0.4*\x)});
\draw [green!75!black, samples=100, domain=0:3.333, thick, dashed] 
	plot(\x, {-0.6*\x});
\draw [green!75!black, samples=100, domain=0:1.111, thick] 
	plot(\x, {-0.6*\x - sqrt(0.4*\x)});
\shade [left color=yellow!20, right color=yellow!20] 
	(0, 0) -- (3.333, 0) -- (10, 2) -- 
	plot [smooth, samples=100, domain=0:10] (\x, {sqrt(0.4*\x)}) -- cycle;
\shade [left color=yellow!20, right color=yellow!20] 
	(0, 0) -- (3.333, 0) -- (10, -2) --
	plot [smooth, samples=100, domain=0:10] (\x, {-sqrt(0.4*\x)}) -- cycle;	
\draw [black!50!white, ->] (0, 0) -- (12.2, 0);
\draw [yellow!50!black, samples=100, domain=0:9.7, thick] 
	plot(\x, {sqrt(0.4*\x)});
\draw [yellow!50!black, samples=100, domain=0:9.7, thick] 
	plot(\x, {-sqrt(0.4*\x)});
\draw [yellow!50!black, samples=100, domain=0:3.333, thick, dashed] 
	plot(\x, {0});
\draw [very thick, red!75!black, ->] (0, 1) -- (10, 4);
\node [rotate=17] at (6, 3.2) {\color{red!75!black} accept $H_2$};
\draw [very thick, yellow!50!black, ->] (3.333, 0) -- (12.2, 2.75);
\draw [very thick, yellow!50!black, ->] (3.333, 0) -- (12.2, -2.75);
\node [rotate=17] at (8, 1) {\color{yellow!50!black} accept $H_1$};
\node [rotate=-17] at (8, -1) {\color{yellow!50!black} accept $H_1$};
\draw [very thick, green!75!black, ->] (0, -1) -- (10, -4);
\node [rotate=-17] at (6, -3.2) {\color{green!75!black} accept $H_0$};
\end{tikzpicture}
\caption{Sobel and Wald's decision procedure to decide between three hypotheses $H_0, H_1, H_2$. Letting $H_i$ denote the event that a pair of users contains $i$ colluders, this leads to a joint decoding method. \label{fig:sketch6}}
\end{figure}
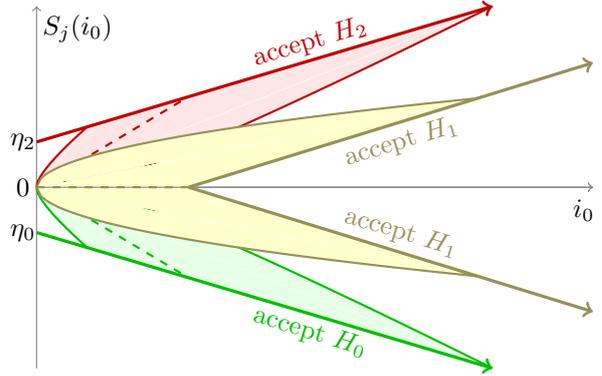

\section{Wald vs. Tardos: Applications}
\label{sec:comp}

In the previous two sections we saw how to construct sequential schemes based on the (dynamic) Tardos scheme, and based on Wald's SPRT. Here we briefly consider possible applications of both schemes, and how the two schemes compare. We consider three scenarios as follows:
\begin{enumerate} \vspace{-0.1cm}
	\item Defending against the interleaving attack. \vspace{-0.15cm}
	\item Defending against arbitrary pirate attacks. \vspace{-0.15cm}
	\item The classical group testing model. \vspace{-0.1cm}
\end{enumerate}
These settings are studied in the following subsections.

\subsection{Defending against the interleaving attack}
\label{sec:int}

When defending against the interleaving attack, which may be the most practical pirate strategy due to its simplicity and its strength, the Neyman-Pearson decoder of~\eqref{eq:g} designed against the interleaving attack~\cite{laarhoven14ihmmsec} may be a good choice, even when the tracer's estimate $c_0$ is not exact~\cite{furon14}. With this score function, in each segment the average pirate score increases by $\mu_1 \sim \frac{1}{2 c^2}$, while for innocent users we have $\mu_0 \sim \frac{-1}{2 c^2}$ which is proved in the appendix. \\

In Wald's scheme, recall that we may set $\eta_1 = \ln(1/\eps_1')$ conservatively where $\eps_1'$ is the per-user false positive probability. Without using a lower threshold, setting $\eps_1' = \eps_1/n$, i.e., $\eta_1 = \ln(n/\eps_1)$ (and $\eta_0 = -\infty$) guarantees that with probability at least $1 - \eps_1$, no innocent users are ever accused, and with probability $1$ all colluders are eventually found. We illustrate the scheme with a toy example in Figure~\ref{fig:ex1wald}, where we set the parameters as $n = 1000$, $c_0 = c = 10$, and $\eps_1 = 10^{-3}$, so that $\eta_1 \approx 13.82$. On average, it takes about $3000$ segments to trace the colluders.

\begin{figure}[t]
\centering
\subfloat[][Example of Wald's scheme: Interleaving attack]{\includegraphics[width=8cm]{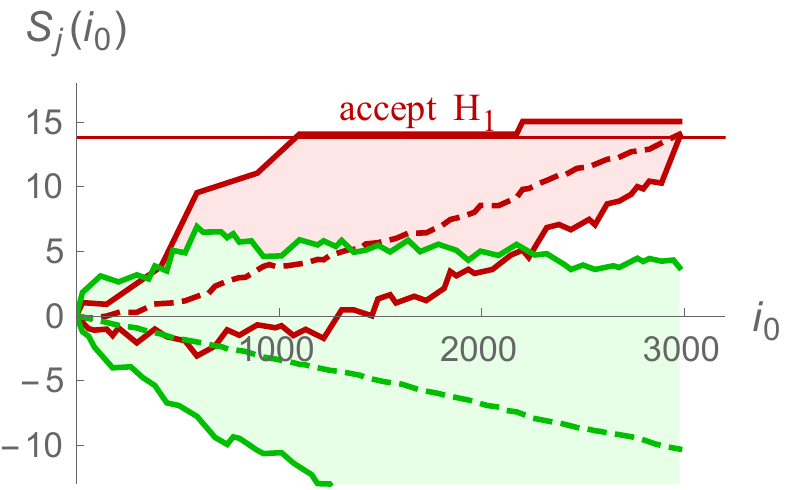}\label{fig:ex1wald}} \\
\subfloat[][Example of Tardos' scheme: Interleaving attack]{\includegraphics[width=8cm]{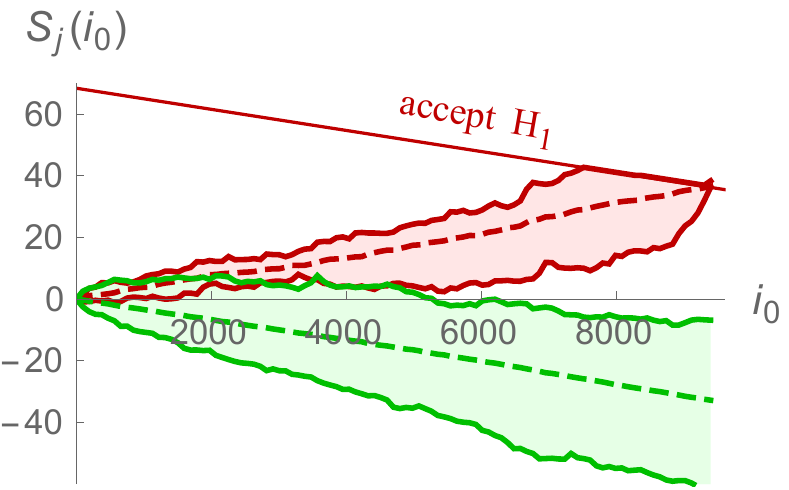}\label{fig:ex1tardos}}
\caption{A simulation of the interleaving attack in fingerprinting with Wald's SPRT (above) and the sequential Tardos scheme with log-scores (below). Even though it might be caused by not so tight parameter choices in the sequential Tardos scheme, with Wald's scheme we can more easily select tight parameters and trace colluders faster. \label{fig:ex1}}
\end{figure}

For large $n$ and $c$, this corresponds to an asymptotic upper threshold of $\eta_1 \sim \ln n$ and an expected time of $\ell \sim 2 c^2 \ln n$ until all pirates have been found. This thus corresponds to drawing a horizontal accusation threshold starting at $(0, \ln n)$, and the pirates are expected to be found around the point $(2c^2 \ln n, \ln n)$. \\

In the sequential Tardos scheme with log-likelihood scores, setting the parameters is more difficult. Various provable bounds on the error probabilities for given parameters are not tight, leading to pessimistic estimates and higher thresholds and code lengths than required. One could also estimate the actual required code length for a given set of parameters directly, leading to better scheme parameters, but this would have to be done for each instance separately; if any of the parameters $c, n, \eps_1, \eps_2$ then changes, one would have to redo the simulations or computations to find good practical parameters for the new setting. 

To illustrate how far the provable parameters are off from reality, we again used the toy example of Figure~\ref{fig:ex1wald} and used the provable bounds from \cite[Theorem 3]{laarhoven14ihmmsec} and Theorem~\ref{thm:adaptive1}, to obtain the following parameters:
\begin{align}
&\qquad \quad \mu_0 \approx 0.00382, \qquad \mu_1 \approx -0.00343, \\
&\ell \approx 17\,953, \qquad \eta_1^{(\ell)} \approx 6.9078, \qquad \eta_1^{(0)} \approx 68.41.
\end{align}
Note that $\eta_1^{(0)}$, the value of the boundary at $i_0 = 0$, is significantly higher than when using Wald's SPRT. An illustration of how this scheme would then work in practice is given in Figure~\ref{fig:ex1tardos}. In most cases the scheme finds all pirates after roughly $9000$ segments, which although much less than the provable code length of $\ell \approx 18\,000$ is much higher than the practical code length of the SPRT of $\ell \approx 3000$ segments.

For large $n$ and $c$ we can again compute what the parameters converge to. First, for this setting we also obtain an asymptotic code length of $\ell \sim 2 c^2 \ln n$, and this again corresponds to the asymptotic point $(2 c^2 \ln n, \ln n)$; see e.g.\ \cite[Theorem~3, Corollary~3]{laarhoven14ihmmsec}. However, in the sequential Tardos scheme this accusation threshold is a decreasing line (cf.~Figure~\ref{fig:sketch2}) with a slope equal to $\mu_0 \sim \frac{-1}{2c^2}$. This means that at time $i_0 = 0$, the accusation line asymptotically starts at $\ln n - (2 c^2 \ln n) \cdot \frac{1}{2c^2} = 2 \ln n$; the red line starts twice as high as in Wald's SPRT, at the point $(0, 2 \ln n)$. So although both schemes achieve the same asymptotic code length, even in the limit of large $n$ and $c$ these schemes are slightly different.

In this case there are several reasons to prefer Wald's SPRT approach: it is easier to choose good parameters, and asymptotically the accusation threshold lies lower than in the sequential Tardos scheme, allowing for a slightly faster tracing of the colluders. 

\subsection{Defending against arbitrary pirate attacks}

For the general, uninformed fingerprinting game, where it is not known what collusion channel was used, the tracer has to use a decoder that works well against arbitrary collusion channels. Again, the paper of Furon and Desoubeaux~\cite{furon14} compares various of these candidate decoders, each of which were derived through different optimization techniques. 

The decoder described in the previous subsection, designed against the interleaving attack, is capacity-achieving in the uninformed setting as well~\cite{laarhoven14ihmmsec}, and so a similar construction as in the previous subsection may again be used, both in the sequential Tardos scheme and in Wald's sequential scheme. As described in~\cite{laarhoven13wifs}, with this score function it can only be guaranteed that $(\mu_1 - \mu_0) / \sigma_0$ is sufficiently large regardless of the collusion channel, i.e., it is possible to distinguish between the innocent and guilty distributions. However, it could be that for different collusion channels, both $\mu_0$ and $\mu_1$ are smaller than when the interleaving attack is used. To cope with these difficulties, one could normalize the scores, i.e., based on $y_i$ and $p_i$, compute $\mu_0$ and $\sigma_0$ for segment $i$, and translate and scale the scores so that $\mu_0$ and $\sigma_0$ are the same as for the interleaving attack.

Alternatively, one could use a wide range of different methods, such as using several score functions simultaneously; estimating the collusion channel and using this estimate to choose the score function~\cite{charpentier09, furon09b}; using a generalized linear decoder~\cite{abbe12, desoubeaux13, meerwald12}; or settle for slightly less and use the suboptimal but `universal' symmetric score function of \v{S}kori\'{c} et al.\ \cite{skoric08} which works almost the same for any collusion strategy. For small collusion sizes this score function does not perform that poorly~\cite[Figure~3]{furon14}, and it might make designing the scheme somewhat easier.

\subsection{The classical group testing model}

Let us further highlight how the sequential Tardos scheme and Wald's SPRT can behave very differently, by showing how both schemes apply to the classical group testing model. In group testing~\cite{dorfman43} one is tasked to identify the defective members $\mathcal{C}$ from a population $\mathcal{U}$ by performing group tests: testing a query group $\mathcal{Q} \subseteq \mathcal{U}$ returns a positive result if $\mathcal{Q} \cap \mathcal{C} \neq \emptyset$ and a negative result otherwise. Applications include blood testing for viruses, where pooling blood samples of several persons and testing this batch leads to a positive test result iff the virus is present in the tested batch. In terms of fingerprinting, this problem corresponds to dealing with the all-$1$ attack~\cite{laarhoven13allerton, laarhoven14ihmmsec, meerwald11}.

As described in~\cite{laarhoven14ihmmsec}, the Neyman-Pearson approach to the all-$1$ attack in fingerprinting leads to the following optimized decoder $g$:
\begin{align}
g(x,y) = \begin{cases} 
\frac{1}{c} \ln(2) & \text{if } (x, y) = (0, 0); \\
\ln\left(2 - 2^{-1/c}\right) & \text{if } (x, y) = (0, 1); \\
-\infty & \text{if } (x, y) = (1, 0); \\
\ln(2) & \text{if } (x, y) = (1, 1). \end{cases} \label{eq:gt-scores}
\end{align}
Note that this function does not depend on $p$ anymore; to deal with the all-$1$ attack, it is best to replace the random bias generation using the arcsine distribution with a fixed bias $p \sim \frac{\ln 2}{c}$. In the non-adaptive, simple decoding setting, this leads to a required code length of $\ell \sim \frac{c \ln n}{(\ln 2)^2}$, while in the joint decoding setting the required code length becomes $\ell \sim c \log_2 n$, a factor $\ln 2$ less.

In Wald's scheme, choosing scheme parameters is done similarly as in Section~\ref{sec:int}. If we again consider the toy application of $c = 10$, $n = 1000$ and $\eps_1 = 10^{-3}$ (with $\eps_2 = 0$, not using a lower boundary), then we may again set $\eta_1 \approx 13.82$ and we are ready to use the scheme. Figure~\ref{fig:ex2wald} shows an example simulation of this scheme with these parameters, using the all-$1$ score function from~\eqref{eq:gt-scores}. For simplicity, we used the asymptotic approximation $p = (\ln 2)/c$ for generating the code.

\begin{figure}[t]
\centering
\subfloat[][Example of Wald's scheme: All-$1$ attack]{\includegraphics[width=8cm]{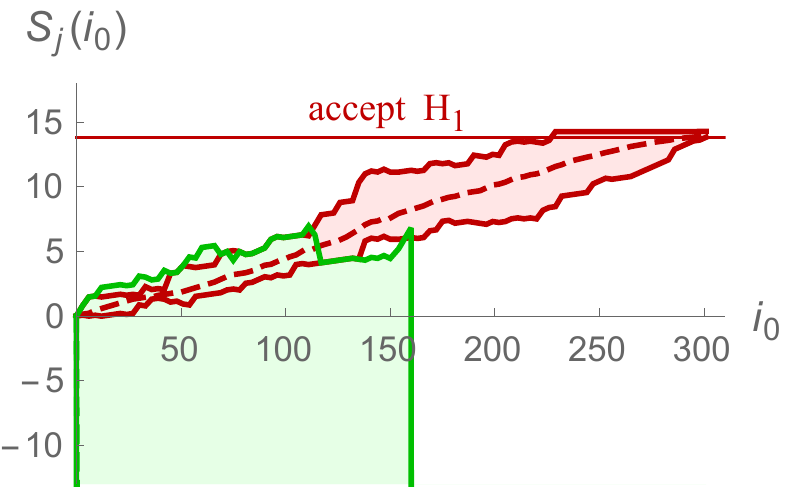}\label{fig:ex2wald}} \\
\subfloat[][Example of Tardos' scheme: All-$1$ attack]{\includegraphics[width=8cm]{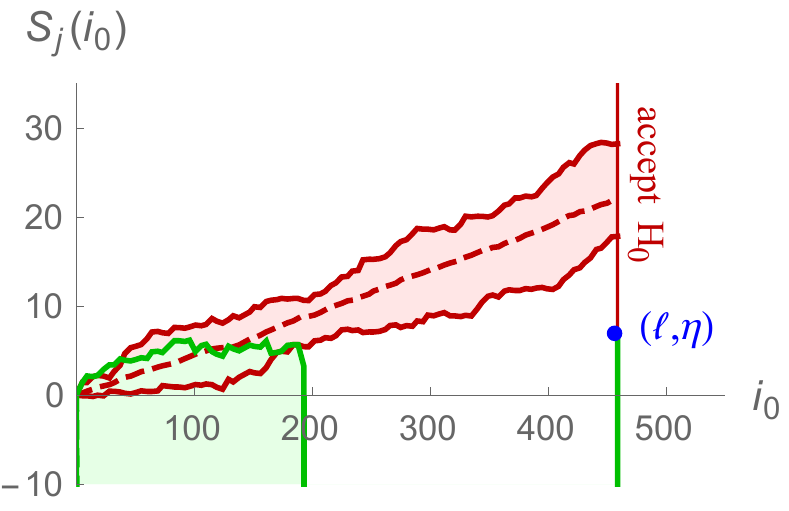}\label{fig:ex2tardos}}
\caption{A simulation of the classical group testing model. The sequential Tardos scheme offers no improvement over non-adaptive group testing, and again Wald's scheme finds the defectives faster. \label{fig:ex2}}
\end{figure}

In the sequential Tardos scheme, one would again first determine the point $(\ell, \eta)$ at which time a decision is taken(cf.\ Figure~\ref{fig:sketch2}), and then draw the accusation threshold by drawing a line towards the $y$-axis, parallel to the line with the average innocent user scores. Note however that with this score function, the event $(x,y) = (1,0)$ is impossible for a guilty user (if a member $j \in \mathcal{C}$ has $(\vc{x}_j)_i = 1$, then by definition $y_i = 1$ as well), and so if this event occurs we know for sure that this user is innocent, and we assign the user a score of $g(1,0) = -\infty$. Since the probability that this event occurs for innocent users is a positive constant, it immediately follows that $\mu_0 = -\infty$. As a consequence, the accusation threshold starting from $(\ell, \eta)$ with a slope of $-\mu_0 = \infty$ becomes a vertical line upwards. This is illustrated in Figure~\ref{fig:ex2tardos} which shows an example application of the sequential Tardos scheme with the same parameters as in Wald's scheme. The provable bounds from~\cite[Theorem 3]{laarhoven14ihmmsec} and Theorem~\ref{thm:adaptive1} lead to $\ell \approx 459$ and $\eta_1^{(\ell)} \approx 6.91$.


\section{Wald vs. Tardos: An overview}
\label{sec:overview}

Let us conclude with a brief overview of the two solutions for the sequential fingerprinting game. For simplicity, we will compare the sequential Tardos scheme with Wald's scheme without a lower boundary, as that seems to be the most convenient solution in fingerprinting. Note that although the sequential Tardos scheme is different from Wald's basic description of the sequential probability ratio test procedure, it could be considered a variant of the latter; truncating the thresholds was already considered by Wald himself, and using different shapes for the stopping boundary has been discussed extensively in various literature on the SPRT. 

\begin{table}[!t]
\renewcommand{\arraystretch}{1.3}
\centering
\begin{tabular}{p{5cm}cc} \toprule
\textbf{Characteristic} & \textbf{Wald} & \textbf{Tardos} \\ \midrule
Optimal (cf.\ Theorem~\ref{thm:adaptive2}) & \cmark & \xmark \\
Asymptotically optimal & \cmark & \cmark \\
No false negatives ($\eps_2 = 0$) & \cmark & \xmark \\
Guaranteed decision at time $\ell$ & \xmark & \cmark \\
Parameters to choose & $\eta_1$ & $\ell, \eta_1^{(0)}, \eta_1^{(\ell)}$ \\
Simple relations for parameters & \cmark & \xmark \\
Better than non-ad. gr. testing & \cmark & \xmark \\
\bottomrule
\end{tabular}
\caption{A quick summary of various characteristics of Wald's SPRT and the sequential Tardos scheme. For Wald's scheme we assume we are not using a lower boundary $\eta_0$, i.e., we set $\eps_2 = 0$ and $\eta_0 = -\infty$.\label{tab:tab1}}
\end{table}

To compare some of the characteristics, Table~\ref{tab:tab1} gives a quick summary of the various characteristics of both schemes. Here optimality refers to the optimality described in Theorem~\ref{thm:adaptive2}, and asymptotic optimality refers to the large $n$ and large $c \ll n$ regime. Note that by setting $\eps_2 = 0$ in Wald's scheme, we guarantee that eventually all colluders are always caught. This solution of an infinite accusation boundary comes at the cost of not knowing in advance how many segments are at most needed to reach a decision, although in practice this does not seem to be an issue. As we saw in Section~\ref{sec:comp}, often choosing parameters is easier for Wald's scheme than for the sequential Tardos scheme; both because fewer parameters have to be chosen, and because there is a simple approximate relation between this single parameter $\eta_1$ and the error probability $\eps_1$, which holds exactly if the scores behave like true Brownian motions. We further saw that for the sequential group testing setting, the sequential Tardos scheme offers no improvement over non-adaptive decoding (while Wald's scheme does).

Finally, as mentioned before, Wald's scheme has already been studied since the 1940s, with many books appearing on the topic ever since~\cite{bartroff13, chernoff72, govindarajulu04, jennison00, mukhopadhyay09, siegmund85, wald47, wetherill86}, while the sequential Tardos scheme~\cite{laarhoven13tit} was more of an ad-hoc construction to build a scheme that works well in adaptive settings as well. With Wald's scheme being easier to design, in many cases performing better than the sequential Tardos scheme (performing optimally well), and being backed by decades of research on the topic (allowing practitioners to tweak the scheme to their needs using known results from the literature), it seems that Wald's scheme is a more practical choice than the sequential Tardos scheme. 

In this paper we further settled an important question on the optimality of these schemes (i.e., both schemes are asymptotically optimal in the sequential setting), but one important open question remains: is it possible to design truly adaptive fingerprinting schemes that perform even better than these sequential designs? Or are the sequential schemes discussed in this paper also optimal in the adaptive setting? This is left as an open problem for future work.


\section*{Acknowledgments}

The author thanks Boris \v{S}kori\'{c} for various useful comments on an initial version of this manuscript.


\appendix

\section{Expected innocent scores for the interleaving attack}
\label{app:mu1}

\begin{lemma}
Suppose that:
\begin{itemize}
  \item the encoder uses the arcsine distribution encoder;
  \item the collusion channel is the interleaving attack $\thint$;
  \item the decoder is the interleaving log-likelihood decoder. 
\end{itemize}
Then the expected score of an innocent user ($\mu_1$) in a single segment is asymptotically given by:
\begin{align}
\mu_0 &= \expn_{x,y,p} \big[g(x,y,p) \mid H_1\big] 
\sim -\frac{1}{2c^2}.
\end{align}
\end{lemma}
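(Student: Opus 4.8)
The plan is to reduce the statement to a one-dimensional integral over the bias $p$ and to extract its $1/c^2$-asymptotics by a Taylor expansion, with the real work reserved for the endpoints $p \to 0$ and $p \to 1$. First I would pin down the joint law of $(X,Y)$ for an innocent user. Its own symbol satisfies $\pr(X = 1 \mid p) = p$ and is independent of the coalition, while for the interleaving attack $\thint$ one has $\pr(Y = 1 \mid p) = \sum_{z=0}^{c}\binom{c}{z}p^{z}(1-p)^{c-z}\tfrac{z}{c} = p$, with $X$ and $Y$ conditionally independent given $p$. Hence the four outcome probabilities are $(1-p)^2$, $p(1-p)$, $p(1-p)$, $p^2$. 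Writing the arcsine density from \eqref{dist1} as $f(p) = 1/(\pi\sqrt{p(1-p)})$ and using $g$ from \eqref{eq:g}, this gives
\begin{align}
\mu_0 &= \int_0^1 h(p)\, f(p)\, dp, \qquad \text{where} \\
h(p) &= (1-p)^2\ln\Bigl(1 + \tfrac{p}{c(1-p)}\Bigr) + 2p(1-p)\ln\Bigl(1-\tfrac1c\Bigr) + p^2\ln\Bigl(1 + \tfrac{1-p}{cp}\Bigr).
\end{align}

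Next I would expand the three logarithms for large $c$. The key point is that the $O(1/c)$ part of $h(p)$ cancels identically: the first and third terms each contribute $+p(1-p)/c$, while the middle term contributes $-2p(1-p)/c$. At order $1/c^2$ the three terms contribute $-p^2/(2c^2)$, $-p(1-p)/c^2$, and $-(1-p)^2/(2c^2)$, which sum to $-\tfrac{1}{2c^2}\bigl(p + (1-p)\bigr)^2 = -\tfrac{1}{2c^2}$, independently of $p$. Since $\int_0^1 f(p)\,dp = 1$, integrating this constant already produces the claimed value; notably the encoder enters only through the fact that $f$ is a probability density.

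The hard part will be justifying the expansion near the endpoints, where the logarithm arguments are not small and $f$ diverges like $p^{-1/2}$ (resp. $(1-p)^{-1/2}$). I would split $[0,1]$ into a bulk region $[2/c,\,1-2/c]$ and two tails. On the bulk both $\tfrac{p}{c(1-p)}$ and $\tfrac{1-p}{cp}$ are $O(1)$, and a Taylor remainder such as $(1-p)^2\cdot O\bigl((\tfrac{p}{c(1-p)})^3\bigr) = O\bigl(p^3/(c^3(1-p))\bigr)$ integrates against $f$ to $O(c^{-5/2})$, using $\int^{1-2/c}(1-p)^{-3/2}dp = O(\sqrt c)$; the bulk also carries all but an $O(1/\sqrt c)$ fraction of the mass of $f$. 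On the tails I would bound $h$ directly: near $p=0$ the only singular contribution is $p^2\ln(1+\tfrac{1-p}{cp}) = O\bigl(p^2\ln\tfrac{1}{cp}\bigr)$, and $\int_0^{2/c} p^2\ln\tfrac1{cp}\cdot p^{-1/2}\,dp = O(c^{-5/2}\log c)$ after the substitution $u = cp$; the symmetry $h(1-p) = h(p)$ and $f(1-p) = f(p)$ then settles the tail at $p=1$.

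Collecting the bulk main term $-\tfrac{1}{2c^2}\bigl(1 - O(1/\sqrt c)\bigr)$ with the $o(1/c^2)$ bulk remainder and the tail bounds yields $\mu_0 = -\tfrac{1}{2c^2} + o(1/c^2)$, which is the asserted asymptotic. I expect the genuine obstacle to be exactly this interplay between the logarithmic blow-up of $g$ and the integrable singularity of the arcsine density at the boundary; the interior cancellation of the $O(1/c)$ term and the evaluation of the $O(1/c^2)$ term are routine algebra once the joint law of $(X,Y)$ is in hand.
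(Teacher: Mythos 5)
Your proposal is correct, but it takes a genuinely different route from the paper's own proof. You expand the innocent-score integrand directly in powers of $1/c$: the exact first-order terms $\pm p(1-p)/c$ cancel pointwise, the second-order terms sum to the $p$-independent constant $-\tfrac{1}{2c^2}$, and you control the Taylor remainders and the arcsine endpoint singularities by a bulk/tail split around $[2/c,\,1-2/c]$ --- this all checks out, including the $O(c^{-5/2})$-type remainder rates. The paper instead forms the difference $\mu_1 - \mu_0$, which collapses into the single-logarithm integral $\frac{1}{\pi c}\int_0^1 \sqrt{p(1-p)}\,\ln\bigl(1 + \tfrac{c}{(c-1)^2 p(1-p)}\bigr)\,\mathrm{d}p$; it shows this is $\sim 1/c^2$ (upper bound from $\ln(1+x)<x$, lower bound by truncating the integration range to $[\delta,1-\delta]$ with $\delta = 1/\sqrt{c}$), and then concludes $\mu_0 \sim \tfrac{1}{2c^2} - \tfrac{1}{c^2} = -\tfrac{1}{2c^2}$ by invoking the externally known asymptotic $\mu_1 \sim \tfrac{1}{2c^2}$, which comes from the mutual-information identity stated before Theorem~\ref{thm:adaptive3}. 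So your argument is self-contained --- it never needs the guilty-score asymptotics --- and it yields an explicit error term; the paper's argument is shorter and exploits the elegant closed form of $\mu_1 - \mu_0$ (where the positive integrand makes the endpoint singularity easier to handle), at the price of being conditional on the known value of $\mu_1$. (You also implicitly and correctly read the lemma's $H_1$/``$\mu_1$'' labels as typos for the innocent-user quantities $H_0$/$\mu_0$, exactly as the paper's own proof does.)
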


\begin{proof}
For $\mu_0$, we write out the expectation:
\begin{align}
\mu_0 &= \expn_{x,y,p} \big[g(x,y,p) \mid H_1\big] \\
&= \int_0^1 \frac{\textrm{d}p}{\pi \sqrt{p(1-p)}} \left[p^2 \ln\left(1 + \frac{1-p}{cp}\right) \right.\\
&+ \left.2p(1-p) \ln\left(1 - \frac{1}{c}\right) + (1-p)^2 \ln\left(1 + \frac{p}{c(1-p)}\right)\right].
\end{align}
Similarly, we can write out the definition of the average colluder score in a single segment as:
\begin{align}
\mu_1 &= \int_0^1 \frac{\textrm{d}p}{\pi \sqrt{p(1-p)}} \left[p^2 \left(1 + \frac{1-p}{cp}\right) \ln\left(1 + \frac{1-p}{cp}\right) \right. \\
&+ 2p(1-p) \left(1 - \frac{1}{c}\right) \ln\left(1 - \frac{1}{c}\right) \\
&+ \left.(1-p)^2 \left(1 + \frac{p}{c(1-p)}\right) \ln\left(1 + \frac{p}{c(1-p)}\right)\right].
\end{align}
Combining these results, and merging the logarithms into one term, we obtain the following expression for $\mu_1 - \mu_0$:
\begin{align}
\mu_1 - \mu_0 
 &= \int_0^1 \frac{\sqrt{p(1-p)} \, \textrm{d}p}{\pi c} \ln\left(1 + \frac{c}{(c-1)^2 p(1-p)}\right).
\end{align}
Since we know that $\mu_1 \sim \frac{1}{2c^2}$, we need to prove that the right hand side is asymptotically similar to $\frac{1}{c^2}$. Rearranging terms, we thus need to prove that
\begin{align}
I \stackrel{\operatorname{def}}{=} \int_0^1 \textrm{d}p \sqrt{p(1-p)} \ln\left(1 + \frac{c}{(c-1)^2 p(1-p)}\right) \sim \frac{\pi}{c}.
\end{align}
First, using $\ln(1 + x) < x$ for all $x > 0$, we obtain:
\begin{align}
I < \int_0^1 \frac{c \, \textrm{d}p}{(c-1)^2 \sqrt{p(1-p)}} = \frac{\pi c}{(c-1)^2} \sim \frac{\pi}{c}.
\end{align}
To get a matching lower bound, we first reduce the range of integration from $[0,1]$ to $[\delta, 1 - \delta]$ for some $\delta > 0$, noting that the integrand is strictly positive:
\begin{align}
I > \int_{\delta}^{1 - \delta} \textrm{d}p \sqrt{p(1-p)} \ln\left(1 + \frac{c}{(c-1)^2 p(1-p)}\right).
\end{align}
Choosing $\delta = \frac{1}{\sqrt{c}}$, the term inside the logarithm is small and the following bound is tight enough to obtain the result:
\begin{align}
I \gtrsim \int_{\delta}^{1 - \delta} \frac{c \, \textrm{d}p}{(c-1)^2 \sqrt{p(1-p)}} \sim \frac{\pi}{c} - \frac{4}{\pi c} \arcsin \sqrt{\delta} \to \frac{\pi}{c}.
\end{align}
This proves that $I \sim \frac{\pi}{c}$, hence $\mu_0 \sim -\frac{1}{2c^2}$.
\end{proof}

\end{document}